\DeclareMathOperator*{\argmax}{arg\ max}
\definecolor{javared}{rgb}{0.6,0,0} %
\definecolor{javagreen}{rgb}{0.25,0.5,0.35} %
\definecolor{javapurple}{rgb}{0.5,0,0.35} %
\definecolor{javadocblue}{rgb}{0.25,0.35,0.75} %
\lstdefinestyle{customc}{
  belowcaptionskip=\baselineskip,
  breaklines=true,
  xleftmargin=\parindent,
  language=java,
  showstringspaces=false,
  basicstyle=\footnotesize\ttfamily,
  keywordstyle=\bfseries\color{javapurple},
  commentstyle=\itshape\blue,
 numbers=left,
 numbersep=.1cm, 
 linewidth=\columnwidth,
 xleftmargin=2.2em,
 framexleftmargin=1.1em,
 frame=single,
 float=H, 
 aboveskip=\baselineskip
}
\lstdefinestyle{CStyle}{
    backgroundcolor=\color{backgroundColour},   
    commentstyle=\color{mGreen},
    keywordstyle=\color{magenta},
    numberstyle=\tiny\color{mGray},
    stringstyle=\color{mPurple},
    basicstyle=\footnotesize,
    breakatwhitespace=false,         
    breaklines=true,                 
    captionpos=b,                    
    keepspaces=true,                 
    numbers=left,                    
    numbersep=5pt,                  
    showspaces=false,                
    showstringspaces=false,
    showtabs=false,                  
    tabsize=2,
    language=C
}
\definecolor{codegreen}{rgb}{0,0.6,0}
\definecolor{codegray}{rgb}{0.5,0.5,0.5}
\definecolor{codepurple}{rgb}{0.58,0,0.82}
\definecolor{backcolour}{rgb}{0.95,0.95,0.92}
\lstdefinestyle{braystyle}{
  commentstyle=\red,%
  keywordstyle=\blue,%
  numberstyle=\tiny\color{codegray},
  stringstyle=\color{codepurple},
  basicstyle=\scriptsize,%
  breakatwhitespace=false,         
  breaklines=true,                 
  captionpos=b,                    
  keepspaces=true,                 
  numbers=left,                    
  numbersep=4pt,                  
  showspaces=false,                
  showstringspaces=false,
  showtabs=false,                  
  tabsize=2,
  frame=None,
  language=C,
  float=H
}
\lstdefinestyle{braystyle1}{
  commentstyle=\red,%
  keywordstyle=\blue,%
  numberstyle=\tiny\color{codegray},
  stringstyle=\color{codepurple},
  basicstyle=\scriptsize,%
  breakatwhitespace=false,         
  breaklines=true,                 
  captionpos=b,                    
  keepspaces=true,                 
  numbers=left,                    
  numbersep=4pt,                  
  showspaces=false,                
  showstringspaces=false,
  showtabs=false,                  
  tabsize=2,
  frame=single,
  language=C,
  float=H
}
\newcommand*{\red}[1]{\textcolor{red}{#1}}
\newcommand*{\blue}[1]{\textcolor{blue}{#1}}
\newcommand{\ToolName}[0]{\texttt{MC$^2$}}
\newcommand{\Execution}[0]{\texttt{Monte Carlo Execution}}
\newcommand{\Graph}[0]{\texttt{PathCache}}
\keywords{Greybox Fuzzing; Automated Vulnerability Detection; Noisy Binary Search; Monte Carlo Counting; Execution Complexity}
\begin{document}
\fancyhead{}

\title{$MC^2$: Rigorous and Efficient Directed Greybox Fuzzing}

\author{Abhishek Shah}
\affiliation{%
  \institution{Columbia University}
  \country{}
}

\author{Dongdong She}
\affiliation{%
  \institution{Columbia University}
  \country{}
}

\author{Samanway Sadhu}
\affiliation{%
  \institution{Columbia University}
  \country{}
}

\author{Krish Singal}
\affiliation{%
  \institution{Columbia University}
  \country{}
}

\author{Peter Coffman}
\affiliation{%
  \institution{Columbia University}
  \country{}
}

\author{Suman Jana}
\affiliation{%
  \institution{Columbia University}
  \country{}
}

\begin{abstract}
Directed greybox fuzzing is a popular technique for targeted software testing that seeks to find inputs that reach a set of target sites in a program. Most existing directed greybox fuzzers do not provide any theoretical analysis of their performance or optimality. 

In this paper, we introduce a complexity-theoretic framework to pose directed greybox fuzzing as a oracle-guided search problem where some feedback about the input space (e.g., how close an input is to the target sites) is received by querying an oracle. Our framework assumes that each oracle query can return arbitrary content with a large but constant amount of information. Therefore, we use the number of oracle queries required by a fuzzing algorithm to find a target-reaching input as the performance metric. Using our framework, we design a randomized directed greybox fuzzing algorithm that makes a logarithmic (wrt. the number of all possible inputs) number of queries in expectation to find a target-reaching input. We further prove that the number of oracle queries required by our algorithm is optimal, i.e., no fuzzing algorithm can improve (i.e., minimize) the query count by more than a constant factor. 

We implement our approach in \ToolName{} and outperform state-of-the-art directed greybox fuzzers on challenging benchmarks (Magma and Fuzzer Test Suite) by up to two orders of magnitude (i.e., $134\times$) on average. \ToolName{} also found 15 previously undiscovered bugs that other state-of-the-art directed greybox fuzzers failed to find. 

\end{abstract}

\maketitle
\section{Introduction}
\label{section:introduction}
Directed greybox fuzzing is a popular technique for targeted software testing with many security applications such as bug finding, crash reproduction, checking static analyzer reports, and patch testing~\cite{cafl, aflgo, beacon, fuzzguard, hawkeye, fuzzingbook}. Given a set of target sites in a program, directed greybox fuzzers automatically search a program's input space for inputs that reach the targets. Since the input spaces of real-world programs are very large, most existing directed greybox fuzzers use evolutionary algorithms to focus their search on promising input regions identified using feedback information through instrumented program execution. For example, the fuzzers often collect feedback information about control-flow graph distance or branch constraint distance from the target and prioritize mutating inputs that are close to the target~\cite{cafl, aflgo, hawkeye}.

The designs of existing evolutionary directed fuzzers in the literature are typically guided by intuition and empirical results, but, to the best of our knowledge, they do not provide any theoretical analysis of their performance.
While strong empirical results are a necessary metric for evaluating any fuzzer design, without a rigorous theoretical understanding, it is difficult to understand the key guiding principles of fuzzer design. For example, what is the best possible (i.e., optimal) directed fuzzer? How well does a fuzzing algorithm scale with the input space size? What kind of feedback information is most useful for fuzzing? How can an algorithm best use the feedback? Can one do better than evolutionary algorithms in using this feedback information?  

In this paper, we introduce a novel computational complexity-theoretic framework to answer some of these questions and design an asymptotically optimal directed greybox fuzzing algorithm. We further demonstrate the practical advantages of our algorithm with extensive empirical results where our algorithm is up to two orders-of-magnitude faster, on average, than the state-of-the-art directed greybox fuzzers in challenging benchmarks (Magma and Fuzzer Test Suite).

\vspace{0.1cm}
\noindent\textbf{Complexity-Theoretic Framework.} To reason about an optimal fuzzer, we introduce a complexity-theoretic formulation of directed greybox fuzzing that abstracts away the specific details about the type of instrumentation and fuzzing algorithm into a unified framework. We model the task of directed greybox fuzzing as an oracle-guided search problem: to find inputs that reach the target given query access to an oracle that executes the program and reveals some information about the search space (i.e., the identity of the promising input regions) to the fuzzing algorithm.

Our framework makes no assumptions about program behaviors or input/output distributions and faithfully adheres to the design of modern greybox fuzzing. To model the lightweight fuzzing instrumentation used by practical fuzzers for collecting feedback information during a program execution, we allow the oracle to return arbitrary content with a large but constant amount of information per query. Formally, we allow the oracle to be any function $O : I \rightarrow \{0, 1\}^c$ that internally executes the instrumented target program on a constant number of inputs from an input region $I$ and returns $c$ bits of information (see Section~\ref{section:methodologymodel} for more details). 
We model the adaptive, feedback-driven paradigm used in practical fuzzers by enabling the fuzzer to arbitrarily adapt its choice of executions/input regions provided to the oracle based on the information received from the oracle in prior queries. 

\vspace{0.1cm}
\noindent\textbf{Execution Complexity.} We then introduce the notion of execution complexity, a metric to asymptotically measure the performance of any fuzzing algorithm in our framework in terms of the number of oracle queries made by the fuzzer before finding an input that reaches the target site. This metric is a very good fit for our setting as program execution and feedback generation dominate the runtime of real-world fuzzers~\cite{fullspeedfuzzing}. 
At a conceptual level, our complexity metric is similar to the query complexity used to reason about the lower bounds on potential advantages provided by quantum algorithms by separating the quantum part of the algorithm from the classical part through an oracle~\cite{qc1, qc2}. 

In our case, we want to establish lower bounds on the advantages provided by information feedback to the fuzzing algorithms and measure the corresponding execution (i.e., query) complexity. It has been shown that any search in an input space of size $N$ with a boolean oracle that only indicates whether a given test input is the desired one or not (e.g., a blackbox fuzzer) will take at least $O(N)$ queries~\cite{shaffer2012data}. In this paper, we use our framework to explore the lower bounds on the advantages of extra feedback information (i.e., large but constant per oracle query) and how to design an adaptive algorithm that can best exploit such feedback.

\vspace{0.1 cm}
\noindent\textbf{An Optimal Fuzzing Algorithm.} For designing a concrete fuzzing algorithm, we further introduce a special type of oracle called the noisy counting oracle. A noisy counting oracle takes two arbitrary input regions, approximately counts the number of inputs in each region reaching the target sites, and returns the one with higher count as the more promising one. Due to the approximate nature of the counts, we assume that the noisy oracle returns the incorrect answer with probability $p<1/2$. Later in this section, we describe how we design an approximate Monte Carlo counting algorithm to implement the noisy oracle in practice.

Given access to such a noisy oracle, we design a randomized algorithm for directed greybox fuzzing that achieves the optimal expected execution complexity $O(\frac{\log(N)}{(\frac{1}{2} - p)^2})$, up to constant factors, in an input space with $N$ inputs. We further prove that this execution complexity cannot be improved (beyond a constant factor) by any other fuzzing algorithm (evolutionary or otherwise) for any given $p<1/2$. %

As we detail in Section~\ref{section:methodologyfuzzer}, our fuzzing algorithm, at its core, uses a counting oracle to select regions with higher counts and repeatedly narrows down promising regions with binary search.
If our counting oracle was noiseless, a simple binary search based fuzzing algorithm will efficiently find target-reaching inputs. However, the noise in the counting oracle can cause the fuzzer to sometimes incorrectly select input regions with low counts. As we do not know which input regions have larger counts with full certainty, our algorithm must be robust to such noise. In this paper, we build on the noisy binary search approach introduced by Ben-Or et al.~\cite{noisybinarysearchorr}. To deal with the noise, the randomized noisy binary search algorithms~\cite{noisybinarysearchkarp, noisybinarysearchorr} maintain a set of weights for each region representing the belief of the algorithm about how likely the desired input is in that region. The algorithm %
iteratively increases the weights in promising regions based on each oracle query and prioritizes narrowing down these promising input regions.

\vspace{0.1 cm}
\noindent\textbf{Approximate Counting with Monte Carlo. } We develop a Monte Carlo algorithm for implementing the noisy counting oracle that is needed by our randomized directed greybox fuzzing algorithm. Monte Carlo random sampling is one of the popular approaches to approximate counts~\cite{karpdnf, kuldeep}. For example, to approximately count the number of people in the world who like ice cream, Monte Carlo methods will poll a random subset of individuals if they like ice cream and multiply the world's population by the ratio of people who liked ice cream in the poll to the total number of participants. The more people one polls, the more accurate the count is.

However, applying such techniques naively in our setting can result in most of the approximate counts being zero even though their true values might be small non-zero numbers.  Consider a target program with multiple branch constraints guarding the target. Suppose we wish to count the number of inputs that reach the target in an input region with $1$ million inputs and the true count of inputs reaching the target, unknown to us, is $10$. 
To efficiently approximate this count, we might execute the target program on a small number of inputs selected uniformly at random from the input region and multiply the size of the input region by the ratio of the number of inputs that reach the target in our executions to the total number of tested inputs.  The challenge, however, is that unless we generate a prohibitively large number of inputs $\sim 10^{5}$, we will estimate the count as zero because we are unlikely to find the few target-reaching inputs with a small number of randomly selected inputs.

Estimating the count of inputs reaching the target as zero can be highly detrimental to the fuzzing algorithm. In practice, most fuzzing target sites are only reachable by a small number of inputs satisfying one or multiple branch constraints. Estimating the corresponding count for any large input region as zero will degrade our fuzzing algorithm's performance significantly because it will fail to identify input regions with larger counts. To overcome this problem, we observe that even if we did not find the few inputs that reach the target, we can still compute a upper bound on the count with high confidence. 

\vspace{0.1 cm}
\noindent\textbf{Concentration Bounds. } We compute such probabilistic bounds by using concentration bounds~\cite{randomizedalgobookconcentration}, a well-studied technique to upper-bound the probability of a random variable taking a value within a given range based on the variable's mean and variance. The intuition behind such bounds is that the closer the mean (with small variance) is to the range, the higher the likelihood that the random variable will take a value within the range. For example, if we model the branch distance~\cite{branchdistance} of a branch constraint as a random variable and compute the mean and variance of the values observed at the branch during the program executions with the randomly selected inputs, we can use the concentration bounds to derive an upper bound on the likelihood of satisfying the branch constraint.

\begin{figure}[!t]
\centering
\includegraphics[scale=0.6]{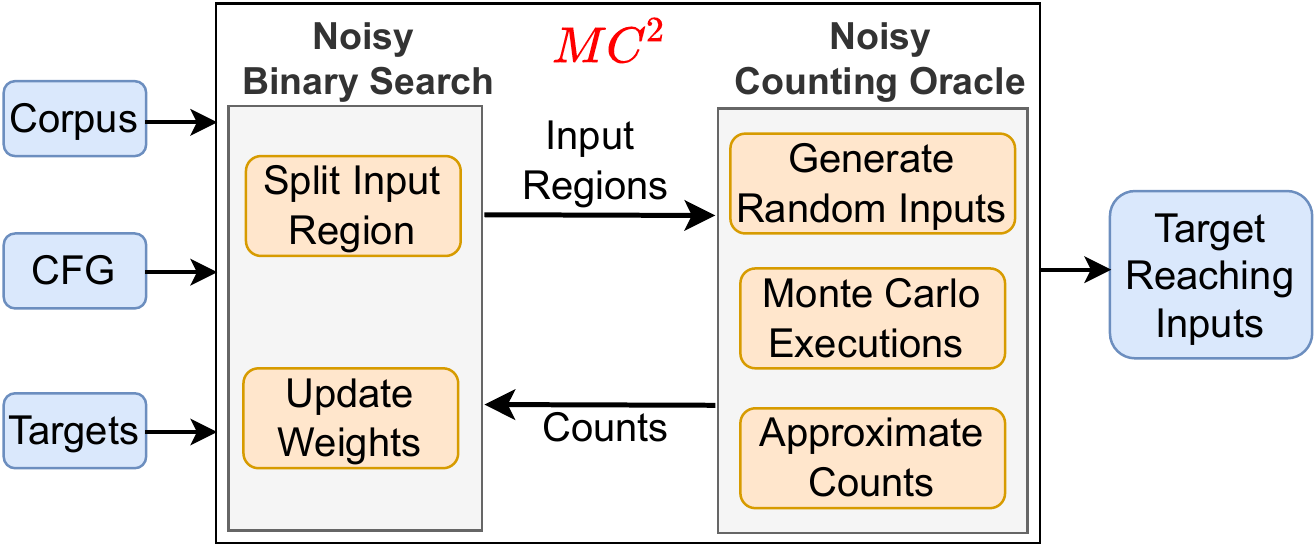}
\caption{\textbf{Workflow of \ToolName{}}}
\label{fig:workflow}
\end{figure}

One issue with these concentration bounds is that they might incur large over-approximation error that can increase the noise in the oracle, hindering the performance of the fuzzing algorithm. Therefore, we use the concentration bounds sparingly. Specifically, during a single oracle query with a set of randomly selected inputs, we first observe which branches are satisfied for which inputs and  apply probabilistic upper bounds only for the  branches that have never been satisfied for any input. For the rest, we use the empirically observed non-zero ratio of the number of inputs satisfying a branch to the total number of tested inputs.

\vspace{0.1 cm}
\noindent\textbf{Counting along Multiple Branches.} For any given path reaching a target, we approximately count the number of inputs that satisfy each branch in the path as described above and combine them to get an estimate on the count of inputs that reach the target (See Section~\ref{section:methodologycounting}). To efficiently approximate counts for nested branches, we further introduce \Execution, a lightweight form of program execution that modifies control-flows at runtime to ensure nested inner branches will be visited during the program execution even if the input did not satisfy the outer branch constraints. This technique enables us to collect information about any branch and subsequently efficiently approximate their counts with a small number of program executions. We note that even though our counting algorithm cannot provide guaranteed error bounds for arbitrary programs, our experimental results demonstrate that the method is highly effective on real-world programs.

We implement our approach in \ToolName{} (\underline{M}onte \underline{C}arlo \underline{C}ounting) shown in Figure \ref{fig:workflow} and evaluate \ToolName{} against state-of-the-art directed greybox fuzzers on challenging benchmarks with real-world programs. Our results are very promising.  In the Magma benchmark, \ToolName{} finds bugs faster by 134x, on average, and finds 16 more bugs than the next-best fuzzer. In the Fuzzer Test Suite benchmark, \ToolName{} reaches targets 77x faster, on average, and reaches 2 more targets than the next-best fuzzer. In addition, \ToolName{} found 49 previously undiscovered bugs in real-world programs, 15 more than the next-best fuzzer. We release an open source version of \ToolName{} at \url{https://hub.docker.com/r/abhishekshah212/mc2}. 

Our contributions are summarized as follows.
\begin{itemize}

\item We introduce a complexity-theoretic framework for defining directed greybox fuzzing as an oracle-guided search problem and introduce execution complexity, a metric, to measure a fuzzing algorithm's asymptotic performance. 

\item We design an asymptotically optimal randomized directed greybox fuzzer that has logarithmic expected execution complexity in the number of possible inputs. We also show that this expected execution complexity cannot be improved, up to constant terms, by any fuzzer. 

\item We develop a Monte Carlo algorithm for implementing a noisy counting oracle that can work efficiently together with our fuzzing algorithm.

\item We implement our technique in \ToolName{} and show its promise, outperforming existing state-of-the-art greybox fuzzers in challenging benchmarks (Magma and Fuzzer Test Suite) by up to two orders-of-magnitude.
\end{itemize}

\section{Methodology}
\label{section:methodology}
We first introduce a generic complexity-theoretic framework for reasoning about the best possible directed greybox fuzzer in terms of an oracle-guided search problem. We next instantiate a specific type of oracle called the noisy counting oracle and use it to build an optimal directed greybox fuzzer using noisy binary search. We then show how to implement a noisy counting oracle in practice. 

\subsection{Terminology and Notation}
Below, we provide a summary of the terminology used throughout this paper. In this paper we use the word fuzzer to refer to a directed greybox fuzzer unless otherwise noted. We denote the target program as $P$ and its large but finite-sized input space to be explored during fuzzing as $\mathbb I$. We denote the count of elements in a set with the cardinality $| \cdot |$ symbol and the size of the input space as $N=| \mathbb I |$.

\vspace{0.1 cm}
\noindent\textbf{Input Region.} 
We refer to any subset $I \subseteq \mathbb I$ as input region $I$. Since programs being fuzzed generally accept inputs as a sequence of bytes, we express, without loss of generality, the target program's input space as all possible combinations of byte values in the form of a hyperrectangle:  $\mathbb I = [0, 255]^d  $ where $d$ indicates the number of input bytes. 
Although $N=|\mathbb I|$ is finite, it is exponentially large in $d$, the number of inputs bytes to be fuzzed, which is generally bounded in real-world fuzzers for performance reasons~\cite{maxlimitaflgo, maxlimitparmesan}.

\vspace{0.1 cm}
\noindent\textbf{Control Flow Graph.} 
In the greybox setting, we have access to a target program's control flow graph: $CFG=(V, E)$, where the set of vertices $V$ represents basic blocks and the set of edges $E$ represents control-flow transitions (e.g., branches). We define a path $\pi$ as a finite sequence of edges in the CFG $\pi : E_0 \rightarrow E_1 \rightarrow ... \rightarrow E_k $, in which any two consecutive edges are adjacent.

Since we only care about paths that start from the program entry point (e.g., main), we denote the set of paths in the CFG that start from a program entry basic block and terminate at a program exit basic block as $\Pi$. As directed fuzzers are interested in reaching some target $E_T \in E$, we say that a path $\pi \in \Pi$ reaches a target if $E_T \in \pi$ and denote the subset of paths that reach target $E_T$ as $\Pi_T$.

\vspace{0.1 cm}
\noindent\textbf{Program Execution.} 
We denote executing a program $P$ on input $i$ as $P(i)$ and its corresponding path through the CFG as $\pi_{P(i)}$. We also say an input $i$ reaches the target $E_T$ if its path $\pi_{P(i)}$ reaches the target $E_T$.

\vspace{0.1 cm}
\noindent\textbf{Branch Constraint.} Each edge in our CFG corresponds to a conditional branch constraint: $c: \mathbb I \rightarrow \{0,1\}$ that takes the following normalized form: \\
\begin{tabular}{crcl}
 Constraint  &   \texttt{c($\mathbf{i}$)} & \texttt{:=} &\texttt{d($\mathbf{i}$)} $\bowtie$ \texttt{0}\\
 Predicate  &  $\bowtie$ & \texttt{:=} & \texttt{\{==,<,<=,>,>=\}}\\
  Input  &  $\mathbf{i}$ & \texttt{:=} & \texttt{[$i_1, i_2,..., i_d$]}\\
\end{tabular}\\ 

The branch distance~\cite{branchdistance} $d(\mathbf{i})$ captures the effect of all program instructions preceding the branch constraint during a program execution on input $i$.

\subsection{A Framework for Directed Greybox Fuzzing}
\label{section:methodologymodel}
While empirical measures can compare the performances of fuzzers with different designs, such measurements alone cannot tell us how close the designs are to the best possible (i.e., optimal) fuzzer. Towards this end, in this section, we introduce a complexity theoretic framework to reason about the lower bounds on the performance of an optimal fuzzer in terms of the number of target program executions. We design our framework to faithfully capture the characteristics of modern greybox fuzzers: collecting feedback information about a program execution through lightweight instrumentation and adapting their algorithms based on such feedback. 

\label{section:oracleexecutionmodel}

\vspace{0.1 cm}
\noindent\textbf{Fuzzing as Oracle-Guided Search.} %
Our framework allows the fuzzer to learn information about any bounded input region $I$ by querying an oracle. Formally, the oracle $O : I \rightarrow \{0, 1\}^c$ is any function that internally executes the program $P$ on some pre-determined constant number of inputs $i\in I$, since brute-forcing all inputs is not practical, and returns arbitrary content with a large but constant amount, $c$ bits, of information. In this context, constant means independent of input size, as is customary in complexity-theoretic analysis. 
 
Specifically, we assume that each oracle query can provide at most $c$ bits of information about a given input region either covering the entirety of the input space or some parts of it, where $c$ is some constant, capturing the information collected from a fuzzer's lightweight instrumentation. As we later demonstrate in Section~\ref{section:methodologyfuzzer}, a fuzzer can potentially use these $c$ bits to reduce the number of inputs it considers by a multiplicative factor of $\frac{1}{2^c}$ (e.g., $c=1$ bit cuts input space in half), which captures modern fuzzer's use of feedback information to reduce the number of inputs by adapting the generated inputs towards particular seeds. 

We model each oracle query as providing at most $c$ bits of information because it captures the trade-off made by real-world fuzzers: lowering the amount of instrumentation for faster execution times. Unlike symbolic/concolic execution that capture more information by collecting all the path constraints along the execution path and invoking a Satisfiability Modulo Theory (SMT) solver repeatedly for a large number of paths, modern fuzzers use lightweight instrumentation to minimize execution overhead. Hence, the upper bound of a constant number of bits of information is a natural fit for these fuzzers. 

To ensure it can reason about generic programs, our framework is distribution-free: it makes no assumptions about the target program behaviors or the types of inputs. Our framework also makes no assumptions on the type of instrumentation or the data (e.g., distance, dataflow, etc.) collected by the fuzzers as feedback. Moreover, our framework posits that fuzzers have no prior knowledge about programs and only acquire information through oracle queries, mimicking real-world fuzzing deployments that run on a large collection of programs without pre-built knowledge of program specifics. 

\vspace{0.1 cm}
\noindent\textbf{Problem Definition.} We define the task of directed greybox fuzzing as an oracle-guided search problem: given access to a program $P$, its control-flow graph $CFG$, a bounded input space $\mathbb I$, a target edge $E_T$, and an oracle $O: I \rightarrow \{0,1\}^c$, a fuzzing algorithm has to find an input $i \in \mathbb I$  such that $\pi_{P(i)} \text{ reaches the target } E_T$.

\vspace{0.1 cm}
\noindent\textbf{Execution Complexity.}
In this framework, to measure a fuzzer's performance, we analyze the number of oracle queries to solve the underlying search problem. Since the number of oracle queries directly maps to the number of program executions, up to constant factors, we define the performance metric of fuzzers in our framework as {\em execution complexity}: the number of oracle queries needed before finding an input that reaches the target. This asymptotic performance metric is well-fit for real-world fuzzers because instrumented target program executions dominate the fuzzing performance overhead~\cite{fullspeedfuzzing}. In our analysis, we ignore constant factors because they significantly depend on the hardware and we desire a hardware-independent analysis as fuzzers are run over a heterogeneous collection of hardware.

We can now reason about the best possible fuzzer with our framework. The framework provides a lower bound on the performance of any fuzzer (evolutionary or otherwise) including the best possible fuzzer, a result which we prove in Appendix~\ref{appendix:proofs}.
\begin{theorem}[Lower Bound for Any Fuzzing Algorithm]
\label{theorem:lowerbound}
Given any oracle revealing a constant $c$ bits of information per query, any fuzzing algorithm requires $\Omega(\log(N))$ execution complexity to find inputs that reach the target in an input space of size $N$. 
\end{theorem}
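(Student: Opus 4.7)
The plan is to establish the lower bound via a standard adversarial information-theoretic argument, adapted to the oracle model defined in the paper. The key observation is that the target-reaching input can be arbitrary in $\mathbb{I}$, so in the worst case the fuzzer must be able to distinguish every input in $\mathbb{I}$ from every other using only the bits it receives from the oracle.

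First I would construct an adversarial family of program/target pairs $\{(P_j, E_{T,j})\}_{j=1}^{N}$ such that, for program $P_j$, exactly one input $i_j \in \mathbb{I}$ reaches the target, and the map $j \mapsto i_j$ is a bijection onto $\mathbb{I}$. This is always possible in our framework since it is distribution-free and makes no assumption on program behavior: one can simply hard-code, via trivial equality checks against $i_j$, a program whose only target-reaching input is $i_j$. A fuzzer that solves the oracle-guided search problem on every instance in this family must, upon termination, output the unique input $i_j$ corresponding to the hidden index $j$.

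Next I would model the deterministic portion of any fuzzing algorithm as a decision tree whose internal nodes correspond to oracle queries and whose edges are labeled by the oracle's $c$-bit responses. Since each query returns an element of $\{0,1\}^c$, every internal node has branching factor at most $2^c$. After $q$ queries the tree has at most $2^{cq}$ leaves, and each leaf deterministically specifies a single output input. For the algorithm to succeed across the entire family, every index $j \in \{1,\ldots,N\}$ must be mapped to a distinct leaf (otherwise two instances with different target-reaching inputs would produce the same output). This forces
\[
2^{cq} \;\geq\; N,
\]
and hence $q \geq \tfrac{1}{c}\log_2 N = \Omega(\log N)$, because $c$ is a constant.

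The one subtlety, and the step I expect to require the most care, is handling randomization: a randomized fuzzer is a distribution over deterministic decision trees, so I would invoke Yao's minimax principle by placing the uniform distribution over the $N$ adversarial instances. Any deterministic algorithm answering correctly on all $N$ instances still needs at least $\lceil \log_2 N / c \rceil$ queries by the counting argument above, so its expected query count on the uniform distribution is $\Omega(\log N)$; Yao's principle then lifts this to any randomized algorithm's expected execution complexity in the worst case, completing the proof of Theorem~\ref{theorem:lowerbound}.
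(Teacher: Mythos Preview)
Your proposal is correct and follows essentially the same information-theoretic route as the paper, which simply observes that identifying one element among $N$ requires $\log N$ bits and each oracle query supplies only a constant $c$ bits. Your version is considerably more rigorous: the explicit adversarial family, the decision-tree counting argument, and the use of Yao's principle to handle randomization are all details the paper's two-sentence proof leaves implicit.
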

In the next section, we design an asymptotically optimal fuzzer. 

\vspace{0.1 cm}
\noindent\textbf{Greybox vs Blackbox Oracle.} We highlight that this lower bound is not achievable with a blackbox oracle, where an oracle query outputs a boolean value indicating if the target was reached or not for a given input. Such a blackbox fuzzing oracle, unlike the greybox one, does not provide $c=1$ bit of information as one oracle query decreases the number of inputs a fuzzer considers by only $1$, rather than by a multiplicative factor of $\frac{1}{2^1}$. Therefore, a fuzzer using a blackbox oracle (e.g., blackbox fuzzer) has $O(N)$ execution complexity ~\cite{shaffer2012data} to find target-reaching inputs in an input space of size $N$ and is asymptotically slower than a greybox fuzzer, which also matches empirical evidence~\cite{aflsmart}.

\subsection{Optimal Directed Fuzzer with Noisy Counting Oracle}
\label{section:methodologyfuzzer}

In this section, we first introduce a special kind of oracle called the noisy counting oracle that identifies which region, among two arbitrary input regions, is more promising by approximately counting the number of inputs in each region that reach the targets. Given this oracle, we then design an asymptotically optimal fuzzing algorithm that we introduce in two stages. We first describe our algorithm in an idealized setting with a noiseless counting oracle and then extend our algorithm to a realistic setting with a noisy counting oracle.

\vspace{0.1 cm}
\noindent\textbf{Noisy Counting Oracle.} We define a noisy counting oracle as taking two arbitrary input regions, approximately counting the number of inputs in each region that reach the targets, and returning $c=1$ bit of information about which region contains more inputs reaching the targets. Due to the approximate nature of the counts, we assume that the noisy oracle returns the incorrect answer with probability $p<1/2$. More formally, on input regions $I_L$ and $I_R$, the counting oracle computes the following formula:
\begin{equation}
    \label{equation:oracle}
    O(I_L, I_R) =
  \begin{cases}
        1& \text{if $C(I_L) \geq C(I_R)$} \\
        0 & \text{otherwise} \\
  \end{cases}
\end{equation}

where $C(I_L), C(I_R)$ denotes the count of inputs that reach the target in the left and right input region, respectively as defined below:
\begin{equation}
    \label{equation:count}
    C(I) = | \{ i \in I | \pi_{P(i)} \text{ reaches the target }\}|
\end{equation}

\vspace{0.1 cm}
\noindent\textbf{Optimal Deterministic Fuzzer.} 
We present a deterministic fuzzing algorithm that achieves the lower bound on execution complexity (i.e., optimal) in Theorem~\ref{theorem:lowerbound}  using a noiseless ($p=0$) counting oracle introduced above. Our fuzzer leverages binary search and splits a given input region into two input regions to use the counting oracle. However, as input regions often span multiple bytes, splitting an input region in half is ambiguous. For example, given a 2 byte input region $[0, 255] \times [0, 255]$ that can be visualized as a square, we can split the input region in half either vertically or horizontally. 

To ensure that the splitting process is unambiguous, we assign a total order to the input space, which conceptually flattens the input space into an array of size $N=| \mathbb I |$. There are many possible such orderings, but in this paper, we use lexicographic order (unless otherwise specified) which flattens byte 1, then byte 2, and so forth. In the prior example, lexicographic order conceptually flattens the input space into the following form [(0, 0), (0, 1), ..., (0, 255), (1,0), (1, 1), ... (255, 255)]. 

Our fuzzing algorithm iteratively splits an input region in half and queries the counting oracle to pick the half with higher counts, eventually finding inputs that reach the target. At each iteration, this algorithm reduces the input region size by a multiplicative factor of $\frac{1}{2^1}$ (i.e., $c=1$ bit of information) and therefore has $O(\log(N))$ execution complexity, achieving the lower bound in Theorem~\ref{theorem:lowerbound}. Algorithm~\ref{alg:fuzzer} in the Appendix illustrates this process.

\vspace{0.1 cm}
\noindent\textbf{Optimal Randomized Fuzzer.} Even though the deterministic algorithm presented earlier achieves the theoretical lower bounds, it is not practical because it depends on a noiseless counting oracle that cannot be efficiently implemented in practice.  Precisely, these types of counting problems belong to \#P, a complexity class that is at least as hard as NP~\cite{karpdnf}. Therefore, in practice, one can only hope to implement an oracle that will approximate the underlying counts. Unfortunately, our prior binary search based deterministic fuzzer design cannot work with such approximations as any error can mislead our fuzzer's selection of the input region with the largest count. We therefore design a new randomized fuzzing algorithm resilient to the noise from approximation error with techniques from the noisy binary search literature~\cite{noisybinarysearchorr, noisybinarysearchkarp, noisybinarysearchgraph}. 

In noisy binary search~\cite{noisybinarysearchorr, noisybinarysearchkarp, noisybinarysearchgraph}, algorithms perform binary search in a setting where comparisons may be unreliable, a natural fit for us because our noisy counting oracle may not reliably compare the counts between input regions due to the approximate nature of the counts. Since there is always some source of uncertainty, it is customary to develop randomized algorithms that succeed with high probability and require a small number of oracle queries in expectation (i.e., expected execution complexity), since the exact number may change each time the algorithm runs. 

We note that the expectation considers all potential behaviors of the randomized algorithm and makes no assumption about the input distribution, which is a natural fit for fuzzing because we are trying to build a practical randomized fuzzing algorithm that works well on any program. Moreover, any randomized fuzzing algorithm with some probability of success can be re-ran multiple times, so that its probability of failure exponentially decreases with more trials to a negligibly small value. Such repeated runs can also be very easily incorporated in fuzzers that run computations repeatedly in a long-running loop. 

To build our noise-resilient fuzzer, we adapt a randomized algorithm proposed in prior work by Ben-Or et al. in noisy binary search~\cite{noisybinarysearchorr}.  We state the expected execution complexity and optimality of our algorithm below, where we provide proofs in Appendix~\ref{appendix:proofs}. 

\begin{theorem}[Algorithm~\ref{alg:noisyfuzzer} Execution Complexity]
\label{theorem:complexity}
Given a noisy counting oracle that returns $c=1$ bit of information with failure probability $p < \frac{1}{2}$ per query, our fuzzing algorithm has $O((1-\delta)*\frac{\log(N)}{(\frac{1}{2} - p)^2})$ expected execution complexity to find inputs that reach the target with success probability at least $1-\delta$. 
\end{theorem}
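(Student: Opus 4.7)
The plan is to follow the potential-function analysis from the Ben-Or and Hassidim noisy binary search framework cited in the paper, adapted to the setting where the counting oracle compares input regions via \emph{approximate} target-reaching counts. First I would flatten the input space into an array of $N$ positions via the lexicographic order fixed in Section~\ref{section:methodologyfuzzer}, and maintain a weight $w_i^{(t)}$ on each position $i$, initialized uniformly to $1/N$. At each step $t$, the algorithm picks the midpoint $m$ that splits the current weighted mass as evenly as possible, queries the oracle on $(I_L,I_R)=([1,m],[m+1,N])$ within the active interval, and multiplies the weights on the returned side by the likelihood-ratio factor $(1-p)/p>1$ (leaving the other side unchanged), followed by renormalization. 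After $T$ queries, the algorithm outputs the position carrying the largest weight and verifies it with a single program execution.

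The core of the argument is a one-step drift bound. Fix any target-reaching position $i^\star$ and define the log-mass potential $\Phi_t = \log\!\bigl(w_{i^\star}^{(t)}/\sum_j w_j^{(t)}\bigr)$. Conditioned on the history $\mathcal{F}_t$, the oracle returns the side containing $i^\star$ with probability at least $1-p$; combining this with the multiplicative update and a Taylor expansion of $(1-2p)\log\!\bigl((1-p)/p\bigr)$ around $p=\tfrac{1}{2}$ gives
\begin{equation*}
\mathbb{E}\bigl[\Phi_{t+1}-\Phi_t \,\big|\, \mathcal{F}_t\bigr] \;=\; \Omega\!\bigl((\tfrac{1}{2}-p)^2\bigr).
\end{equation*}
Each individual update changes $\Phi_t$ by at most $\log\!\bigl((1-p)/p\bigr)=O(1)$, so the differences $\Phi_{t+1}-\mathbb{E}[\Phi_{t+1}\mid \mathcal{F}_t]$ form a bounded martingale-difference sequence, and an Azuma-Hoeffding bound yields exponential concentration of $\Phi_T$ around its mean.

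Putting the drift and concentration together, after $T = C\cdot \log(N)/(\tfrac{1}{2}-p)^2$ queries for a suitable absolute constant $C$, the cumulative drift dominates the Azuma tail with the desired probability $1-\delta$, forcing $\Phi_T\geq \log(1/2)$; equivalently, $i^\star$ holds a constant fraction of the total weight, so the maximum-weight position coincides with a target-reaching input and is confirmed by one additional execution. This yields the expected execution complexity stated in the theorem and, together with Theorem~\ref{theorem:lowerbound}, certifies asymptotic optimality up to the unavoidable $(\tfrac{1}{2}-p)^{-2}$ noise blow-up.

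The main obstacle will be the clean lower bound on the one-step drift: $(1-2p)\log\!\bigl((1-p)/p\bigr)$ is only $\Theta((\tfrac{1}{2}-p)^2)$ in the regime $p\to \tfrac{1}{2}$ that drives the bound, so one must argue uniformly over $p\in(0,\tfrac{1}{2})$ rather than relying on a purely asymptotic expansion. A secondary subtlety is that the oracle compares \emph{counts} of target-reaching inputs between two regions, not whether a single fixed $i^\star$ lies on a given side; I would handle this by treating the oracle as correct whenever it favors the side with the larger total target-reaching mass, and replacing the single-position potential by the aggregate $\Phi_t=\log\!\bigl(\sum_{i\in T}w_i^{(t)}/\sum_j w_j^{(t)}\bigr)$, where $T$ is the set of target-reaching positions, so that the same drift-plus-concentration template goes through.
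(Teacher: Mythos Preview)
Your proposal is correct and is essentially the same approach as the paper's: the paper's proof is simply a reduction to Theorem~2.1 of Ben-Or and Hassidim, noting that Algorithm~\ref{alg:noisyfuzzer} is their noisy binary search once the input space is flattened by the lexicographic order, and summarizing their analysis as ``information entropy arguments'' for the per-query gain followed by ``concentration bounds''---exactly the drift-plus-Azuma template you spell out. Your log-mass potential $\Phi_t$ is the same object as their entropy measure (up to sign), and your identified subtleties (uniformity of the $\Theta((\tfrac{1}{2}-p)^2)$ drift over all $p$, and aggregating over the set of target-reaching inputs rather than a single $i^\star$) are real but minor and do not change the argument; the paper sidesteps both by deferring to the cited reference.
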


\begin{theorem}[Algorithm~\ref{alg:noisyfuzzer} Optimality]
\label{theorem:optimality}
Given any oracle that returns a constant $c$ bits of information with failure probability $p < \frac{1}{2}$ per query, any fuzzing algorithm that succeeds with probability at least $1-\delta$ has $\Omega((1-\delta)*\frac{\log(N)}{(\frac{1}{2} - p)^2})$ expected execution complexity. 
\end{theorem}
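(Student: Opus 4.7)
The plan is to prove this lower bound via an information-theoretic reduction. I would construct a hard family of fuzzing instances indexed by $j\in[N]$: for each $j$, let $\mathcal{F}_j$ be a program on the input space $\mathbb{I}$ whose unique target-reaching input is the $j$-th input under lexicographic order. Drawing $J$ uniformly from $[N]$ puts the fuzzer in a hypothesis-testing situation: any fuzzer that solves the family with success probability at least $1-\delta$ must, given the transcript of oracle responses, output a guess $\hat{J}$ with $\Pr[\hat{J}\neq J]\le\delta$. Reducing to this family is legitimate because Theorem~\ref{theorem:optimality} quantifies over all programs, so the worst-case complexity is at least the complexity on $\{\mathcal{F}_j\}$.

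Next, I would bound the mutual information acquired per oracle query. Even though each query returns $c$ bits, the noise guarantee means the response coincides with the ``correct'' $c$-bit string only with probability at least $1-p$, i.e.\ the oracle factors through a channel with crossover probability $p$. For any query $Q_t$ (adaptively chosen from past responses) and response $Y_t$, we have $H(Y_t\mid Q_t)\le c$ while $H(Y_t\mid Q_t,J)\ge c\,H(p)$, since the noise is independent of $J$ once the query and noiseless answer are fixed. Hence $I(J;Y_t\mid Q_t,Y_{<t})\le c\bigl(1-H(p)\bigr)$. A Taylor expansion of the binary entropy around $1/2$ yields $1-H(p)=\Theta\bigl((\tfrac{1}{2}-p)^2\bigr)$, so each query contributes at most $\Theta\bigl(c(\tfrac{1}{2}-p)^2\bigr)$ bits toward identifying $J$.

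Finally, I would close the argument with Fano's inequality and the chain rule. Since $J$ is uniform over $[N]$ and $\Pr[\hat J\neq J]\le\delta$, Fano gives $H(J\mid\hat J)\le H(\delta)+\delta\log(N-1)$, hence
\[
I(J;\hat J)\;\ge\;(1-\delta)\log N - H(\delta).
\]
By data processing, $I(J;\hat J)\le I(J;Y_1,\dots,Y_T)\le\sum_{t=1}^{T} I(J;Y_t\mid Q_t,Y_{<t})\le T\cdot c\bigl(1-H(p)\bigr)$. Rearranging and absorbing the constant $c$ and the lower-order term $H(\delta)$ into the $\Omega(\cdot)$ yields
\[
T\;=\;\Omega\!\left(\frac{(1-\delta)\log N}{(\tfrac{1}{2}-p)^2}\right),
\]
matching the claim.

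The main obstacle is the adaptive nature of the queries together with the worst-case oracle: the fuzzer chooses $Q_t$ based on $Y_{<t}$, and we must argue that no adaptive strategy circumvents the per-query information cap. The right formalism is to treat the transcript as produced by a memoryless noisy channel whose input (the ``true'' $c$-bit answer) is a deterministic function of $(J,Q_t)$, so that the conditional mutual-information bound $I(J;Y_t\mid Q_t,Y_{<t})\le c(1-H(p))$ holds uniformly over adaptive strategies; the chain rule then handles the composition. A minor subtlety is checking that the constructed family $\{\mathcal{F}_j\}$ is consistent with the oracle model of Section~\ref{section:methodologymodel}, i.e.\ that an adversary can realize the worst-case channel behavior using legitimate program executions, which can be done by padding each $\mathcal{F}_j$ so that the oracle's internal executions distinguish $j$ only through the noisy $c$-bit response.
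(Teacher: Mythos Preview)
Your argument is correct and, at its information-theoretic core, coincides with what the paper invokes: the paper does not give a self-contained proof but simply cites Theorem~2.8 of Ben-Or and Hassidim, remarking that their lower bound is obtained by a reduction to the noisy-channel coding problem. Your Fano-plus-chain-rule derivation is exactly the converse half of that coding theorem unpacked in the fuzzing vocabulary, so the two routes share the same engine (namely $1-H(p)=\Theta\bigl((\tfrac12-p)^2\bigr)$ bits of usable information per query). Yours is more self-contained and makes the handling of adaptivity explicit via the chain rule; the paper's buys brevity by outsourcing the entire argument to the cited reference.

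One small wrinkle worth tightening: the inequality $H(Y_t\mid Q_t,J)\ge c\,H(p)$ presupposes a per-bit binary symmetric channel, whereas two sentences earlier you read ``failure probability $p$'' as ``the whole $c$-bit string is correct with probability $1-p$.'' Those are not the same model for $c>1$, and under the latter reading the per-query information can be as large as $c-H(p)$, which does not vanish as $p\to\tfrac12$. For $c=1$ (the noisy counting oracle the paper actually analyzes) the two readings coincide and your argument is airtight. For $c>1$ the paper is no more precise than you are, and since $c$ is a constant absorbed into the $\Omega(\cdot)$ the discrepancy is ultimately cosmetic---but you should commit to one noise model and state it.
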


We describe our fuzzer in Algorithm~\ref{alg:noisyfuzzer}. To better understand our fuzzing algorithm's design, we first compare it to a naive algorithm that makes multiple oracle queries at each splitting point and takes the majority result returned by those queries. Clearly, such an algorithm will be robust to a noisy oracle. However, it has $O(\frac{\log(N)*\log(\log(N))}{(\frac{1}{2} - p)^2})$ expected execution complexity ~\cite{noisybinarysearchkarp}, which is not optimal from Theorem~\ref{theorem:optimality}. By contrast, our algorithm lowers the expected execution complexity by adaptively selecting the splitting point based on each oracle query, so that there are fewer queries in total. 

In more detail, our algorithm decides where to split by maintaining a set of weights which represent its belief that an input region contains the target-reaching inputs. 
Starting from the belief that any input region is equally likely to contain inputs that reach the target, our fuzzing algorithm iteratively increases the weights in promising regions based on each oracle query and prioritizes splitting at points within these promising input regions. We note that the weight update rule can be thought of as updating the Bayesian posterior~\cite{noisybinarysearchorr} or using the multiplicative weights algorithmic paradigm~\cite{mwua}.

We note that our algorithm has two key desirable properties. First, as the failure probability $p$ in the noisy counting oracle increases, its performance degrades gracefully with a quadratic, not exponential relationship. Moreover, for a given failure probability $p$, it is optimal (i.e., cannot be improved upon) within constant factors as shown in Theorem~\ref{theorem:optimality}.   

Furthermore, our algorithm achieves logarithmic storage complexity $O(\log(N))$ in terms of the size of $\mathbb I$. A naive implementation would store a distinct weight for each input and therefore require an exponential amount of space $O(N)$. Based on the observation that many inputs share the same weight, our algorithm groups input weights together by storing the sum of their individual weights with the corresponding input region in a weight group to avoid redundancy. Each oracle query adds one additional weight group, and since there are logarithmically many oracle queries, the algorithm uses logarithmic amount of space: $O(\log(N))$ groups in expectation.

\begin{figure}[t!]
\vspace{-0.5cm}
 \begin{algorithm}[H]
\footnotesize
\caption{\small Optimal Randomized Fuzzer.} 
\label{alg:noisyfuzzer} 
\lstset{basicstyle=\ttfamily\footnotesize, breaklines=true}
\begin{tabular}{|lp{2.6in}|}\hline
\textbf{Input}:
    & \textit{$\mathbb I$} $\leftarrow$ Input Space as Array\\
    & \textit{$O$} $\leftarrow$ \textbf{Noisy} Counting Oracle from Equation~\ref{equation:oracle} and Algorithm~\ref{alg:noisycumulativecountoracle} \\
\hline
\end{tabular}
\begin{algorithmic}[1]

\State $G = \mathsf{WeightGroup()}$
\Comment{\textcolor{purple}{Initialize a single weight group: (input region, weight)}}
\State $W = [\mathsf{G}]$\Comment{\textcolor{purple}{List of weight groups}}

\While{$\text{max(W)} < \text{$\frac{1}{\sqrt{\log(|\mathbb I|)}}$}$}

        \State \textcolor{purple}{/* Find group that best splits total weight in half */}
    \State $\text{CumulativeWeight = 0}$
    \State MidIdx = 0;
    \For{\textsf{group $\in$ W}} 
        \State CumulativeWeight += group.weight
        \If{ CumulativeWeight $\geq 0.5 $}
            \State break
        \EndIf
        \State MidIdx += 1
    \EndFor

    \State 
    \State \textcolor{purple}{/* Replace W[MidIdx] with two new groups */}
    \State $I_L, I_R = \textsf{Split input region in half at W[MidIdx]}$
    \State W[MidIdx] = ($I_L$, W[MidIdx].weight)
    \State W.insert(MidIdx+1,  ($I_R$, W[MidIdx].weight)) \Comment{\textcolor{purple}{Insert group at specific index}}

    \State

    \State \textcolor{purple}{/* Perform Multiplicative Weights Update  */}
    \If{ $O(I_L, I_R) = 1 $} %
            \State \textcolor{purple}{/* Left region is more promising  */}
            \For{\textsf{idx $\in$ \{0, 1, ..., MidIdx\}}} 
                \State W[idx] *= (1-p) \Comment{\textcolor{purple}{Increase left group weights by (1-p)}}
            \EndFor
            
            \For{\textsf{idx $\in$ \{MidIdx+1, ..., |W| - 1\}}} 
                \State W[idx] *= (p) \Comment{\textcolor{purple}{Decrease right group weights by (p)}}
            \EndFor

    \Else %
            \State \textcolor{purple}{/* Right region is more promising  */}
            \For{\textsf{idx $\in$ \{0, 1, ..., MidIdx\}}} 
                \State W[idx] *= (p) \Comment{\textcolor{purple}{Decrease left group weights by (p)}}
            \EndFor
            
            \For{\textsf{idx $\in$ \{MidIdx+1, ..., |W| - 1\}}} 
                \State W[idx] *= (1-p) \Comment{\textcolor{purple}{Increase right group weights by (1-p)}}
            \EndFor

    \EndIf
    \State 
    \State \textcolor{purple}{/* Normalize such that sum of group weights is 1 */}
    \State $\text{TotalWeight = 0}$
    \For{\textsf{group $\in$ W}} 
        \State TotalWeight += group.weight
    \EndFor
    \For{\textsf{group $\in$ W}} 
        \State group.weight *= $\frac{1}{\text{TotalWeight}}$
    \EndFor
    
\EndWhile
\State $G^* = \argmax_{G \in \textsf{W}}(G.weight)$ \Comment{\textcolor{purple}{Select group with largest weight}}
\State \textbf{return} $\text{random input from $G^*$'s input region}$ \Comment{\textcolor{purple}{Inputs in $G^*$ reach the target}}

\end{algorithmic}
\end{algorithm}
\vspace{-0.4cm}
\end{figure}

\subsection{Noisy Counting Oracle through Monte Carlo Counting}
\label{section:methodologycounting}
In this section, we design a noisy counting oracle by approximately counting the number of inputs in a given input region that reach the target. Our oracle builds upon Monte Carlo counting, so we first introduce the intuitions behind this method and explain why directly using it fails in our setting. We next show how we exploit the graph structure in programs to decompose the count into a summation over individual path counts. We then show how we efficiently approximate this count from individual path counts.

\vspace{0.1 cm}
\noindent\textbf{Monte Carlo Counting.} Suppose we wish to predict the number of votes that a political candidate will receive in some country. Instead of asking every person in the country if they will vote for the candidate, Monte Carlo counting techniques efficiently approximate this count by polling a small number of randomly selected people and multiplying the number of people in the country by the ratio of people who liked the candidate in the poll to the total number of participants. Hence, Monte Carlo counting techniques trade-off accuracy for efficiency (i.e., the more people polled, the more accurate the count). In our context, a naive Monte Carlo counting strategy will be to execute the program on a small number of randomly selected inputs from the input region and multiply the input region size by the ratio of inputs that reached the target during our executions to the total number of tested inputs. 

The challenge with such a strategy is that for most input regions, the approximate count will be zero. The main problem is that the input region size $| I | $, for most real-world programs, is significantly larger than the count $C(I)$ of inputs that reach the target in the given input region, so the chance of reaching the target with a randomly selected input can be very small:  ($\frac{C(I)}{|I|} \sim 256^{-d}$). To approximate this count effectively, the naive Monte Carlo Counting strategy will require a prohibitively large number of program executions, $\sim\frac{1}{\frac{C(I)}{|I|}}$. If the counting oracle estimates the count of inputs that reach the target as zero for most input regions, our fuzzer's performance significantly degrades because it will struggle to identify the input region that contains more inputs reaching the target.

\vspace{0.1 cm}
\noindent\textbf{Exploiting CFG Structure for Counting.}
We observe that the graph structure of the CFG enables us to decompose the count $C(I)$ of inputs that reach the target in an input region $I$ into a summation of counts along any individual path $\pi \in \Pi_T$ that reaches the target. More formally: 
\begin{equation}
    \label{equation:decomposition}
    C(I) = \sum_{\pi \in \Pi_T} C_{\pi}(I)
\end{equation}
where $C_{\pi}(I)$ denote the count of inputs that reach the target along path $\pi \in \Pi_T$.

Although it is not feasible to compute this summation exactly because there can potentially be a large number of paths in a real-world program, this observation informs the design of an efficient approximation method: we can use information about how large the count is for individual paths as hints for the approximate count. 

In the next section, we describe how we efficiently approximate the count of inputs that reach the target along an individual path, for a given input region. In the subsequent section, we describe how we efficiently approximate this summation by selecting the path with the largest count. We show the entire approximate counting process in Algorithm~\ref{alg:noisycumulativecountoracle}. Although this algorithm does not have guaranteed error bounds for arbitrary programs, our experimental results in Section~\ref{sec:evaluation} demonstrate that the method is effective on real-world programs.

\subsubsection{Efficiently Approximating Individual Path Counts}
\label{section:pathcounts}
In this section, we first describe how we use uniconstraint counts to efficiently approximate the count of inputs that reach the target along an individual path in a given input region. We then describe two classes of uniconstraint counts that are challenging to compute and then how we address them.

\vspace{0.1 cm}
\noindent\textbf{Approximating Path Counts.} We wish to compute $C_{\pi}(I)$ which represents the count of inputs that reach the target $E_T$ along path $\pi: E_0 \rightarrow ... \rightarrow E_T \rightarrow ...$ in input region $I$. We observe the set of inputs that reach the target represents an intersection of multiple sets: $I_{E_1} \cap I_{E_2} \cap ... I_{E_T}$, where $I_{E_i}$ indicates the set of inputs that satisfy only the single branch constraint at edge $E_i$ in path $\pi$. The count of inputs in this intersection is strictly less than or equal to the count of inputs in any individual set $I_{E_i}$ because intersections are subsets of individual sets. Therefore, we express the count of inputs that reach the target with the following formula:  
\begin{equation}
    \label{equation:upperbound}
    C_{\pi}(I) = C(I_{E_1} \cap I_{E_2} ... I_{E_T}) \leq min(   C(I_{E_1}), C(I_{E_2}), ..., C(I_{E_T}) )
\end{equation}
where using the minimum count allows us to put an upper bound on the count of inputs along a path. 

\vspace{0.1 cm}
\noindent\textbf{Uniconstraint Counts.} 
In the above equation, $C(I_{E_i})$ represents the count of inputs that satisfy a single branch constraint at edge $E_i$ in path $\pi$, so we call them {\em uniconstraint counts}. Hence, to compute the count of inputs that reach the target along a path, we use the minimum uniconstraint count. 

We can efficiently approximate uniconstraint counts through Monte Carlo counting because individual branch constraints are likely to have a larger count of inputs that satisfy them in contrast to an intersection of multiple branch constraints, which matches empirical evidence from the symbolic execution literature~\cite{qsym}. Therefore, we are more likely to approximate them effectively with a smaller number of program executions. Formally, we use the following approximation formula for uniconstraint counts: 
\begin{equation}
    \label{equation:uniconstraintcount}
C(I_{E_i}) = |I|*r_{E_i}
\end{equation}
 where $r$ is the ratio of inputs that satisfy the branch constraint at edge $E_i$ along path $\pi$ in our random subset to the total number of inputs selected uniformly at random with replacement from input region $I$.

\vspace{0.1 cm}
\noindent\textbf{Challenges in Approximating Uniconstraint Counts. } 
Even though uniconstraint counts are more tractable to be approximated with the naive Monte Carlo counting strategy, there are two classes of uniconstraint counts that are difficult to efficiently approximate. First, some individual branch constraints may be evaluated but not satisfied in a small number of program executions \textbf{(C1)} and second, some individual branch constraints may not be evaluated at all (e.g., nested branches) in a small number of program executions \textbf{(C2)}. For these two classes of uniconstraint counts, a naive strategy will approximate their counts as zero and hence the minimum uniconstraint count that is used to approximate the count of inputs along a path, will be zero. We might choose to increase the number of program executions to handle them, but recall from Section~\ref{section:methodologymodel} that the oracle must internally execute the program a pre-determined constant number of times to avoid brute-forcing. Since we cannot know a priori how many program executions are required to effectively approximate a uniconstraint count, we describe how we address these two classes of branches below. 
\begin{figure}[t!]
\vspace{-0.5cm}
\begin{algorithm}[H]
\footnotesize
\caption{\small Monte Carlo Executions. } 
\label{alg:mcex} 
\lstset{basicstyle=\ttfamily\footnotesize, breaklines=true}
\begin{tabular}{|lrp{1.8in}|}\hline
\textbf{Input}:
    & $\textsf{P}$&$\leftarrow$ Program  \\
    &$\pi$ &$\leftarrow$ Path reaching the target \\
    & $\textsf{Inputs} $&$\leftarrow$ Set of inputs  \\
\hline
\end{tabular}
\begin{algorithmic}[1] 
    
    \State $\textsf{BranchDistances = HashMap()}$ \Comment{\textcolor{purple}{Tracks branch distances across executions}}
    \State $\textsf{BranchSatisfied = HashMap()}$ \Comment{\textcolor{purple}{Tracks if branches satisfied}}
    \For{\textsf{i $\in$ Inputs}}  
        \For{\textsf{inst $\in$ P(i)} }  \Comment{\textcolor{purple}{Execute program $P$ on input $i$}}
            \If{ \textsf{IsBranch(inst)}} \Comment{\textcolor{purple}{Check branch instruction}}
                \State \textsf{dist = GetBranchDistance(inst)}
                \State \textsf{BranchDistances[inst].Add(dist)}
                
                 \State \textsf{is\_satisfied = IsBranchSatisfied(inst)} \Comment{\textcolor{purple}{If branch satisfied 1 else 0}}
                \State \textsf{BranchSatisfied[inst] += is\_satisfied}

                \State \textsf{inst.SetBranchDirection}($\pi$) \Comment{\textcolor{purple}{Enforce runtime control-flows follow $\pi$}}
            \EndIf
            \If{ \textsf{inst.RaiseException()}} \Comment{\textcolor{purple}{Handle program exceptions}}
                \If{ \textsf{inst.ReadInvalidMem()}} 
                    \State \textsf{rand = GenerateRandom()}
                    \State \textsf{inst.SetDestination}(rand) 
                \EndIf
                \State \textbf{continue} \Comment{\textcolor{purple}{Go to next instruction}}
            \EndIf
        \EndFor
    \EndFor
    \State 
    \State \textsf{ratios = [] } \Comment{\textcolor{purple}{Compute $r$ for each branch }}
    \For{\textsf{branch$_i \in$  BranchDistances}}
        
        \State \textsf{s = BranchSatisfied[branch$_i$]}
        \If{$s > 0$}  \Comment{\textcolor{purple}{Branch satisfied at least once}}
            \State \textsf{$r_{E_i} = \frac{s}{|\textsf{Inputs}|}$}
        \Else \Comment{\textcolor{purple}{Probabilistic upper bound}}
            \State \textsf{m = Mean(BranchDistances[branch$_i$])}
            \State \textsf{v = Variance}(BranchDistances[branch$_i$])
            \State \textsf{$r_{E_i}$ = Chebyshev}(m, v, branch$_i$.predicate)  \Comment{\textcolor{purple}{Use  Table~\ref{tab:chebyshev}}}
        \EndIf
        \State \textsf{ratios.append($r_{E_i}$)}  \Comment{\textcolor{purple}{See  Equation~\ref{equation:uniconstraintcount} for interpretation of $r$}}
    \EndFor
    \State \textsf{\textbf{return} ratios}
\end{algorithmic}
\end{algorithm}
\vspace{-0.4cm}
\end{figure}

\vspace{0.1 cm}
\noindent\textbf{C1: Handling Evaluated but Unsatisifed Branches. } 
Although some branches might be evaluated in a small number of executions, they might never be satisfied for any tested input, so we will approximate their uniconstraint count as zero (i.e., $r=0$). We overcome this by computing a probabilistic upper bound on these branches uniconstraint counts using a concentration bound called Chebyshev's inequality~\cite{chebyshev} that sets $r$, the likelihood of satisfying the branch constraint, based on the sample mean and variance of observed branch distances~\cite{branchdistance} during the program executions. Such probabilistic upper bounds are a natural fit in our setting since our uniconstraint counts are themselves upper bounds of the counts of all inputs taking a path.

Specifically, we model branch distance $d(i)$ as a random variable $X$ with mean $\mu$ and variance $\sigma$ to use Chebyshev's inequality. Table~\ref{tab:chebyshev} shows Chebyshev's inequality for any form of branch constraint. It also models logical operators AND and OR as seen by the equality and inequality. Note that Chebyshev's inequality assumes nothing about the distribution except that the mean and variance are finite, which holds for programs run on finite bit-precision hardware. Therefore, it applies for any program behavior and variable type (floats, integers, etc). 

Moreover, we can be confident in our probabilistic upper bounds because the approximation error of the sample mean and variance decreases exponentially fast in the number $k$ of program executions $\sim \frac{1}{e^k}$ for any random variable (i.e., Chernoff–Hoeffding inequality ~\cite{randomizedalgobookconcentration, chernoffsamplevariance}), and therefore, we can derive high quality approximations with a small number of executions. In addition, the quality of the approximation error and how likely it is to occur can be quantified and controlled through $(\delta, \epsilon)$ bounds as shown in the Probably Approximately Correct (P.A.C.) framework~\cite{pacvaliant}.  Note that since these upper bounds can potentially result in a large over-approximation error from the true count, we use them only for branches that were never satisfied in our tested inputs. For the rest, we use the empirically observed non-zero ratio of the number of inputs satisfying the branch to the total number of tested inputs. 
\begin{table}[!t]
    \caption{\small Rules for computing an upper bound on $r$ from Equation~\ref{equation:uniconstraintcount}. We model the branch distance $d(i)$ as a random variable $X$ with mean $\mu$ and variance $\sigma$. $h$ represents the smallest positive number for the data type of $d(i)$ (i.e., for integers, $h=1$).}
    \vspace{-0.2cm}
   \footnotesize
    \centering
    \renewcommand{\arraystretch}{1.5}
    \setlength{\tabcolsep}{1.5pt}
    \begin{tabular}{| l | l |}
    \hline
    Branch constraint & Rule to compute $r$ \\
    \hline
    $d(i) \leq 0$ &
    $Pr(X \leq 0) \leq \frac{\sigma}{\sigma + \mu^2}$ \\
    
    \hline
    
    $d(i) < 0$ & $Pr(X \leq -h) = Pr(X + h \leq 0)$ \\
    
    \hline
    
    $d(i) \geq 0$ &
    $Pr(X \geq 0) \leq  \frac{\sigma}{\sigma + \mu^2}$\\
    
    \hline
    $d(i) > 0$ & $Pr(X \geq h) = Pr(X -h \geq 0)$ \\
    \hline

    $d(i) = 0$ & 
    $Pr(X \geq 0 \land X \leq 0) = \min(Pr(X \geq 0), Pr(X \leq 0))$\\   
    \hline
    $d(i) \neq 0$ & 
    $Pr(X > 0 \lor X < 0) = Pr(X > 0) + Pr(X < 0)$\\

    \hline
    \end{tabular}
    \label{tab:chebyshev}
\vspace{-2mm}
\end{table}

\vspace{0.1 cm}
\noindent\textbf{C2: Handling Unevaluated Nested Branches. }
In our small number of program executions, some branches may not be evaluated at all because they are nested and since we have no information about such unevaluated nested branches, we will approximate their uniconstraint counts as zero. Inspired by prior work in malware analysis~\cite{forcedexecutionother, forcedexecutionpmp, forcedexecutionxforce}, we instead design a new form of execution called \Execution \ that ensures that any input will visit and evaluate all nested inner branches, even if the prior outer constraints along the way are unsatisfied. Hence, in a single execution, we will visit and evaluate all branches together and therefore effectively approximate the uniconstraint counts for all branches, even with a small number of program executions. 

Given a path $\pi \in \Pi_T$ consisting of a set of desired branches reaching the target $T$, \Execution \ modifies control-flows at runtime to always visit these branches, irrespective of the input. Note that \Execution \ does not necessarily change the original execution path: if an input satisfies all branch constraints in $\pi$, \Execution \ behaves as an original execution. However, it can deviate from the original execution path if the input does not satisfy any one of these branch constraints. Even if \Execution \ deviates from the original execution path for an input, the input goes through the same computation as if it was a valid input. Hence, \Execution \ always preserve the sequential ordering of computation. 

To ensure that it will always visit the desired set of  branches, \Execution \ must handle program exceptions. For example, the program can make an out of bounds memory access if the input controls the index of an array. We handle them by advancing the instruction pointer and if the program exception was raised by an invalid memory read, we also replace the destination with a uniformly random value to avoid bias in the computed values between individual executions. 

This design can increase the set of possible values for the destination, but this is a natural fit since we use upper bounds for counts. Even though this design loses dependencies across memory reads and writes, it has low overhead in contrast to prior work that attempts to preserve these dependencies~\cite{forcedexecutionother, forcedexecutionxforce}. In Section~\ref{sec:evaluationrq3} and Appendix~\ref{appendix:crashrate}, we run experiments to better understand this overhead. Algorithm~\ref{alg:mcex} depicts the entire process of \Execution \ on a set of inputs.

\subsubsection{Efficiently Approximating The Summation}
\label{section:sumcounts}

The method described in the prior section only deals with a single path reaching the target. To handle multiple paths, we need to sum each path's count to get a total count as mentioned in Equation~\ref{equation:decomposition}. The challenge, however, is that although we can efficiently approximate counts for a single path, performing this procedure for each path at each oracle query quickly becomes computationally intractable if there are a large number of paths reaching the target.

Instead of computing this sum through contributions from each individual path count at each oracle query, we approximate the sum by only including contributions from a single path with the largest corresponding count. We select the largest-count path
as its count best preserves the sum compared to any other single path. 
However, we do not apriori know which path has the largest count, so we initially spend some computation approximating each path's individual count, amortizing this cost over subsequent oracle queries. Hence, on the fuzzer's first oracle query, we identify the path with the largest count by approximating the count over the input space $\mathbb I$ along each path. However, there will always be uncertainty in this path identification process due to approximation error. 

To capture the uncertainty from our approximate counts, we add a correction factor $\sqrt{\frac{\log(t)}{T_{\pi}}}$ shown in Algorithm~\ref{alg:noisycumulativecountoracle} to also explore alternative paths a small number of times, borrowed from the multi-armed bandit literature~\cite{multiarmed}. $C_{\pi}$ denotes our latest count of inputs that reach the target along path $\pi$ and $T_{\pi}$  denotes number of times path $\pi$ has been selected prior to the $t$-th oracle query. This correction factor conceptually balances uncertainty because as the algorithm acquires more certainty about a path $\pi$ by selecting it more, thereby increasing $T_{\pi}$, the correction factor gradually decreases as the term is inversely proportional to $T_{\pi}$. We keep track of the most recent count information per path through a cache data structure called the \Graph.

\begin{figure}[t!]
\vspace{-0.5cm}
\begin{algorithm}[H]
\footnotesize
\caption{\small Noisy Counting Oracle.} 
\label{alg:noisycumulativecountoracle} 
\lstset{basicstyle=\ttfamily\footnotesize, breaklines=true}
\begin{tabular}{|lp{2.6in}|}\hline
\textbf{Input}:
    & \textit{$I_L$} $\leftarrow$ Left Input Region\\
    & \textit{$I_R$} $\leftarrow$ Right Input Region\\
\hline
\end{tabular}
\begin{algorithmic}[1]

\State $\textsf{Counts = HashMap()}$ 
\For{$\pi \in \Pi^T$ }  
    \If{$\pi \not\in \text{\Graph}$}
        \State $C_{\pi}$ = ApproxCount($\mathbb I, \pi$) \Comment{\textcolor{purple}{Initialize \Graph \ with count over $\mathbb I$}}
        \State Insert $(C_{\pi}, 1)$ into \Graph
    \EndIf
    \State Lookup $(C_{\pi}, T_{\pi})$ in \Graph
    \State $\textsf{Counts[$\pi$]} = C_{\pi} + \sqrt{\frac{\log(t)}{T_{\pi}}} $ \Comment{\textcolor{purple}{Uncertainty term on $t$-th oracle query}}
    
\EndFor
\State $\pi = \argmax_{i \in \textsf{Counts}} (\textsf{Counts[i]})$ \Comment{\textcolor{purple}{Select path with largest count}}

\State

\State $C(I_L), C(I_R)$ = ApproxCount($I_L, \pi$), ApproxCount($I_R, \pi$)

\State   $C_{\pi} = \max(C(I_L), C(I_R))$ \Comment{\textcolor{purple}{Update count based on latest information}}
\State  Update \Graph \ entry for $\pi$ with $(C_{\pi}, T_{\pi} + 1)$

\If{$C(I_L) \geq C(I_R)$} \Comment{\textcolor{purple}{Send back answer to Algorithm~\ref{alg:noisyfuzzer}}}
    \State \textbf{return} 1 
\Else
    \State \textbf{return} 0
\EndIf
\State 
\Procedure{ApproxCount}{$I$, $\pi$} \Comment{\textcolor{purple}{Approximate $C_{\pi}(I)$}} %
    \State $\textsf{Inputs = Select k uniformly random inputs from I}$
    \State $ \textsf{ratios = MonteCarloExecutions(Program, $\pi$, Inputs)}$ \Comment{\textcolor{purple}{Algorithm~\ref{alg:mcex} }}
    \State $\textsf{\textbf{return} } |I| * \min(\textsf{ratios})$ \Comment{\textcolor{purple}{Equation~\ref{equation:upperbound} }}
\EndProcedure

\end{algorithmic}
\end{algorithm}
\vspace{-0.4cm}
\end{figure}

\section{Implementation}
\label{section:implementation}
\noindent\textbf{Toolchain.}
We implement algorithms \ref{alg:noisyfuzzer}, \ref{alg:mcex}, \ref{alg:noisycumulativecountoracle} in C. We use LLVM~\cite{llvm2004} instrumentation and signal handlers to handle the branch and exception logic, respectively in Algorithm~\ref{alg:mcex}. We also incorporate the fork-server optimization used in state-of-the-art fuzzers~\cite{aflgo, parmesan, afl, angora}. As described in Section \ref{section:methodologycounting}, the error of the sample mean and variance drops exponentially fast in the number of program executions $\sim \frac{1}{e^{k}}$, so we set $k=5$ in Algorithm \ref{alg:noisycumulativecountoracle} such that $p=0.01$ in Algorithm \ref{alg:noisyfuzzer} because $\frac{1}{e^5} \leq 0.01$. To reduce storage overheads, we implement the \Graph \ as a trie which avoids duplication when paths share edges. Moreover, we do not track $k$ branch distances per branch in Algorithm \ref{alg:mcex}, but rather compute the sample mean and variance in a streaming setting~\cite{streamingmean}, so that we only store a constant number of values for any number of program executions $k$. Such techniques contribute to our minimal performance overheads in Section \ref{sec:evaluationrq3}.

\vspace{0.1 cm}
\noindent\textbf{Reducing Loop Overheads.} Real-world programs use loops causing the same branch to be visited many times during a program execution. If a single branch is visited a million times per execution, a naive implementation of Algorithm \ref{alg:mcex} will store a million branch distances for this branch per execution. Instead, we share information across multiple visits to a branch to reduce loop storage overheads.  Specifically, in a single \Execution, if a branch is visited multiple times, the branch distance at each visit contributes to the (streaming) mean and variance of the branch. We also enforce control-flows at runtime across multiple visits to a branch by attaching count information to each branch. In addition to the techniques mentioned earlier, these techniques better help us scale to large real-world programs and contribute to our minimal overheads in Section~\ref{sec:evaluationrq3}.

\vspace{0.1 cm}
\noindent\textbf{Assigning A Total Order.}
In Section \ref{section:methodologymodel}, we use the lexicographic total order (i.e., flatten first byte, then second byte, and so forth) to unambiguously split the input space. Although noisy binary search is agnostic to the underlying total order, using lexicographic order in real-world programs assumes that any region of the input space is equally likely to change the counts of inputs that reach the target (i.e., all bytes equally contribute to program behavior). However, for many real-world programs, this assumption does not hold as experimental evidence shows that not all bytes equally contribute to program behaviors ~\cite{neutaint, neuzz, redqueen, greyone, hotbytes}. 

Therefore, instead of assigning a total order based on lexicographic order, we assign an order based on the observed program executions in the noisy counting oracle. 
Specifically, starting with the set of all byte indices, the algorithm partitions the set into two disjoint subsets of equal size, and for each subset, performs \Execution\  on inputs generated by perturbing byte values whose index belongs to the subset. If the program executions change the approximate count, the algorithm recursively repeats the prior step on the subset. Otherwise, the subset is ignored. The algorithm repeats this process until the only sets that remain are sets with a single byte index. We then assign a total order by prioritizing byte indices from these remaining sets ranked by how much each byte index increases the approximate count. We experimentally demonstrate the effectiveness of this approach in Appendix \ref{appendix:totalorder}. 

\vspace{0.1 cm}
\noindent\textbf{Preprocessing.} Existing work in directed greybox fuzzing~\cite{aflgo, hawkeye, beacon} pre-computes information about a program (e.g., static analysis information or CFG distance) to better guide the directed greybox fuzzer. In our setting, we need to pre-compute the set of all paths that reach the target, a task where algorithms require prohibitively expensive runtimes over large real-world CFGs~\cite{simplepaths, graphalgo}. Moreover, algorithms that generate a subset of paths ~\cite{simplepaths} generally do not produce paths with repeated edges, and since loops are a common construct in real-world programs, the set of generated paths is unlikely to be realizable in real program executions. 

We instead use the initial seed corpus to bootstrap a set of paths. Specifically, we generate a set of paths that reach the target by executing the program on a seed close to the target while randomly inverting the direction of branches along the corresponding execution path, keeping paths based on the program executions after the branches were inverted (e.g., reach the target). Consequently, we use this seed's length to set the input region size. In Appendix~\ref{appendix:preprocessing}, we measure our preprocessing time, comparing it to that of directed greybox fuzzers to show that our preprocessing times are similar. We plan to explore better path generation strategies in future work, potentially using ideas from the symbolic execution literature~\cite{klee, qsym, symcc, savior, Godefroid2008AutomatedWF}. 

\vspace{0.1 cm}

\noindent\textbf{Randomly Generating Inputs.} 
We represent the input region as a $d$-dimensional hyperrectangle encoded as $d$ intervals, where each interval represents upper and lower bounds on input values per dimension. Used in Algorithm~\ref{alg:noisycumulativecountoracle}, we select $k$ inputs uniformly at random from the hyperrectangle by generating $d$ integers independently and uniformly at random from each interval, repeating this process $k$ times for $k$ inputs of length $d$. If the initial seed belongs to a given hyperrectangle (i.e., the seed’s byte values are within the $d$ intervals), we include it as part of the $k$ inputs to better utilize initial seed corpus information when applicable.

Note that we do not keep track of a seed corpus. Instead, we keep track of a list of groups as shown in Algorithm~\ref{alg:noisyfuzzer}, where each group corresponds to a tuple: (hyperrectangle, weight) and splitting an input region corresponds to adjusting the hyperrectangle’s per-dimension intervals. To mitigate potential error in the input region weights if the selected path changes during the oracle queries, we also keep track of the groups per path, which does not introduce significant storage overhead as shown in our performance overheads in Section~\ref{sec:evaluationrq3} since our algorithm uses logarithmic number of groups in expectation with respect to the size of $\mathbb I$.

\section{Evaluation}
\label{sec:evaluation}
Our evaluation seeks to answer the following research questions.

\begin{enumerate}
    \item \textbf{Comparison against directed greybox fuzzers:} How does \ToolName{} compare to state-of-the-art directed greybox fuzzers?
    \item \textbf{Bug Finding:} Can \ToolName{} find new real-world bugs?
    \item \textbf{Performance Overhead:} What is the performance overhead of \ToolName{}?
    \item \textbf{Design Choices:} Are \ToolName{}'s design choices justified?
\end{enumerate}

\noindent\textbf{Compute Infrastructure. }
Unless otherwise noted, we ran all experiments on a Ubuntu 18.04 workstation with a Ryzen Threadripper 2970WX 24-Core CPU and 128 GB RAM.

\subsection{RQ1: Fuzzers Comparison}
\label{sec:evaluationrq1}

\noindent\textbf{Tested Benchmarks. }
To avoid any potential bias while creating our own CVE benchmark in terms of bug class or program type, we use the publicly available Magma benchmark~\cite{magma}, which was specifically curated from a diverse set of CVEs. We also evaluate on a subset of the Fuzzer Test Suite benchmark~\cite{fuzzerTestSuite} covered by prior work~\cite{hawkeye, parmesan} to enable fair comparison.

\vspace{0.1 cm}
\noindent\textbf{Baseline Fuzzers. }
Following prior works in directed greybox fuzzing \cite{beacon, hawkeye, cafl, fuzzguard}, we primarily compare \ToolName{} against other directed greybox fuzzers like \texttt{AFLGo}~\cite{aflgo}. Other directed greybox fuzzers are either not available in any form (source or binary)~\cite{fuzzguard, hawkeye, cafl} or have not made their source code public yet ~\cite{beacon} and cannot support our benchmarks (i.e., Magma and Fuzzer Test Suite) without significant modifications. We also reached out to the authors of several of these fuzzers and confirmed that their code is not available for a release at the time of this writing, but they are working on releasing their code soon. Therefore, we could not compare against them on our benchmarks (i.e., Magma and Fuzzer Test Suite).

To compare against alternative designs for directed greybox fuzzing other than \texttt{AFLGo}, we also evaluate \ToolName{} against \texttt{ParmeSan}~\cite{parmesan} which supports a directed greybox fuzzer mode. We contacted the authors of \texttt{ParmeSan} and followed their advice to set it up. Furthermore, as \texttt{ParmeSan} and \texttt{AFLGo} build upon two significantly different regular (i.e., undirected) fuzzers: \texttt{Angora}~\cite{angora} and \texttt{AFL}~\cite{afl}, respectively, we also include the results of the underlying fuzzer implementations to show the improvement a directed greybox fuzzer has over its undirected counterpart in Appendix Tables~\ref{tab:magma4} and~\ref{tab:fts4}.

\vspace{0.1 cm}
\noindent\textbf{Experimental Setup. }
We follow the experimental setup based on prior work \cite{aflgo, cafl, parmesan, beacon, hawkeye, fuzzguard}. We assign each fuzzer a single core and keep 20\% of the cores unused to minimize interference. We configure each directed greybox fuzzer to use the default seeds and targets provided by the benchmarks. 
To avoid potential unfairness or bias in the results arising from how different fuzzing implementations deal with multiple targets, we give fuzzers one target per run to enable a fair comparison in line with prior work~\cite{cafl, parmesan}.
We measure the time it takes to trigger the bug target (for Magma) or reach the target (for Fuzzer Test Suite) with a 6 hour timeout. 

We pick 6 hours because it is the arithmetic mean of the times used by \texttt{Hawkeye}~\cite{hawkeye} and \texttt{AFLGo}~\cite{aflgo} evaluations. Since each fuzzer includes some amount of preprocessing (e.g., distance computations), we also separately measure this time in Table \ref{tab:preprocessingmagma} in Appendix \ref{appendix:preprocessing}. We run with 20 independent trials, using arithmetic mean when reporting results. We note that our Fuzzer Test Suite experiments were performed on a workstation running Ubuntu 18.04 using an Xeon E5-2640 24-Core CPU with 128 GB RAM.

\begin{table}[!]
\caption{\label{tab:magma}\small\textbf{\small Mean time to trigger Magma bugs for each tested fuzzer over 20 trials. We only include the bugs that were triggered within 6 hours for space constraints. (x) refers to the speedup of \ToolName{} relative to the tested fuzzer. (p) refers to the p-value from the Mann-Whitney U test. Since \texttt{ParmeSan} crashed on \texttt{php}, we write N/A for it. $T.O^\star$ indicates 6 hour timeout. We highlight bugs only triggered by \ToolName{} in \colorbox[HTML]{ECF4FF}{blue}. }}
    \vspace{-0.2cm}
\footnotesize
\centering
\setlength{\tabcolsep}{2.5pt}
\begin{tabular}{|l|c|ccr|ccr|}
\hline
\rowcolor[HTML]{EFEFEF}
 & \multicolumn{1}{c|}{\textbf{\ToolName{}}} & \multicolumn{3}{c|}{\textbf{\texttt{AFLGo}}} & \multicolumn{3}{c|}{\textbf{\texttt{ParmeSan}}}  \\
\rowcolor[HTML]{EFEFEF} \multirow{-2}{*}{\textbf{Bug ID}}
 & \multicolumn{1}{c|}{\textbf{Time}} & \multicolumn{1}{c}{\textbf{Time}}& \multicolumn{1}{c}{\textbf{$(x)$}} &\multicolumn{1}{c|}{\textbf{$(p)$}} & \multicolumn{1}{c}{\textbf{Time}} &  \multicolumn{1}{c}{\textbf{$(x)$}} & \multicolumn{1}{c|}{\textbf{$(p)$}}  \\
\hline
\texttt{PDF010} & 3m15s & 4h02m15s & 74x      & $<$0.01 & $T.O^\star$    & $>$110x  & $<$0.01 \\
\texttt{PDF016} & 3m23s & 51m43s   & 15x      & $<$0.01 & 7m10s    & 2x       & $<$0.01 \\
\texttt{PHP004} & 1m04s & 4m09s    & 3x       & $<$0.01 & N/A      & N/A      & N/A      \\
\texttt{PHP009} & 1m07s & 17m08s   & 15x      & $<$0.01 & N/A      & N/A      & N/A      \\
\texttt{PHP011} & 1m01s & 15m24s   & 15x      & $<$0.01 & N/A      & N/A      & N/A      \\
\texttt{PNG003} & 15s   & 15s      & 1x       & 0.25    & 1m38s    & 6x       & $<$0.01 \\
\texttt{PNG006} & 1m36s & $T.O^\star$    & $>$225x  & $<$0.01 & 2m03s    & 1x       & $<$0.01 \\
\texttt{SSL002} & 1m44s & 5m58s    & 3x       & $<$0.01 & 32m27s   & 18x      & $<$0.01 \\
\texttt{SSL003} & 1m39s & 4m30s    & 2x       & $<$0.01 & 16m27s   & 9x       & $<$0.01 \\
\texttt{SSL009} & 4m59s & $T.O^\star$    & $>$72x   & $<$0.01 & 4h51m19s & 58x      & $<$0.01 \\
\texttt{TIF005} & 9m33s & $T.O^\star$    & $>$37x   & $<$0.01 & 3h48m49s & 23x      & $<$0.01 \\
\texttt{TIF006} & 9m36s & $T.O^\star$    & $>$37x   & $<$0.01 & 4h03m29s & 25x      & $<$0.01 \\
\texttt{TIF007} & 8m18s & 1h39m40s & 12x      & $<$0.01 & 56m40s   & 6x       & $<$0.01 \\
\texttt{TIF012} & 9m59s & 2h46m00s & 16x      & $<$0.01 & 3h52m50s & 23x      & $<$0.01 \\
\texttt{TIF014} & 1m36s & 5h49m19s & 218x     & $<$0.01 & $T.O^\star$    & $>$225x  & $<$0.01 \\
\texttt{XML017} & 16s   & 1m09s    & 4x       & $<$0.01 & 23m15s   & 87x      & $<$0.01 \\
\rowcolor[HTML]{ECF4FF}
\texttt{PDF003} & 1m39s & $T.O^\star$    & $>$218x  & $<$0.01 & $T.O^\star$    & $>$218x  & $<$0.01 \\
\rowcolor[HTML]{ECF4FF}
\texttt{PDF008} & 3m21s & $T.O^\star$    & $>$107x  & $<$0.01 & $T.O^\star$    & $>$107x  & $<$0.01 \\
\rowcolor[HTML]{ECF4FF}
\texttt{PDF011} & 1m41s & $T.O^\star$    & $>$213x  & $<$0.01 & $T.O^\star$    & $>$213x  & $<$0.01 \\
\rowcolor[HTML]{ECF4FF}
\texttt{PDF018} & 1m43s & $T.O^\star$    & $>$209x  & $<$0.01 & $T.O^\star$    & $>$209x  & $<$0.01 \\
\rowcolor[HTML]{ECF4FF}
\texttt{PDF019} & 1m37s & $T.O^\star$    & $>$216x  & $<$0.01 & $T.O^\star$    & $>$216x  & $<$0.01 \\
\rowcolor[HTML]{ECF4FF}
\texttt{PNG001} & 3m17s & $T.O^\star$    & $>$109x  & $<$0.01 & $T.O^\star$    & $>$109x  & $<$0.01 \\
\rowcolor[HTML]{ECF4FF}
\texttt{PNG007} & 3m21s & $T.O^\star$    & $>$107x  & $<$0.01 & $T.O^\star$    & $>$107x  & $<$0.01 \\
\rowcolor[HTML]{ECF4FF}
\texttt{SSL020} & 9m16s & $T.O^\star$    & $>$38x   & $<$0.01 & $T.O^\star$    & $>$38x   & $<$0.01 \\
\rowcolor[HTML]{ECF4FF}
\texttt{TIF001} & 9m43s & $T.O^\star$    & $>$37x   & $<$0.01 & $T.O^\star$    & $>$37x   & $<$0.01 \\
\rowcolor[HTML]{ECF4FF}
\texttt{TIF002} & 9m58s & $T.O^\star$    & $>$36x   & $<$0.01 & $T.O^\star$    & $>$36x   & $<$0.01 \\
\rowcolor[HTML]{ECF4FF}
\texttt{TIF009} & 9m49s & $T.O^\star$    & $>$36x   & $<$0.01 & $T.O^\star$    & $>$36x   & $<$0.01 \\
\rowcolor[HTML]{ECF4FF}
\texttt{XML009} & 13s   & $T.O^\star$    & $>$1661x & $<$0.01 & $T.O^\star$    & $>$1661x & $<$0.01\\
\hline
\multicolumn{2}{|c|}{\textbf{Mean speedup}}   & & 134x & & & 144x & \\
\multicolumn{2}{|c|}{\textbf{Median speedup}} & & 38x  & & & 39x  & \\

\hline
\end{tabular}
\vspace{-2mm}
\end{table}

\vspace{0.1 cm}
\noindent\textbf{Magma Results. } Table \ref{tab:magma} summarizes the results as well as the result from applying the Mann-Whitney U test between \ToolName{} and the tested directed greybox fuzzers. We note that although we evaluated over the entire benchmark, not all bugs were triggered, and therefore, for space constraints, we only list the bugs triggered within the time budget in Table~\ref{tab:magma} following prior work \cite{seed_select}. 

\ToolName{} finds bugs 134x faster in arithmetic mean and 38x faster in median compared to the next best fuzzer \texttt{AFLGo}. Moreover, \ToolName{}'s improvement is statistically significant with a significance level of 0.05 for all bugs except \texttt{PNG003}. \ToolName{} was also able to find 28 bugs in total, 16 more than the next-best fuzzer \texttt{AFLGo}, which found only 12 bugs within the time budget. We note that since \ToolName{} does not generate inputs of different length, we also ran this experiment with variants of \texttt{AFLGo} and \texttt{ParmeSan} that do not change the input length.  We found the results to be nearly identical (mean speedup changed by $2\%$), so we did not insert the full table for space constraints. Overall, our results show the promise of using noisy binary search and approximate counting for directed greybox fuzzing.

\begin{figure}[t!]
\lstset{basicstyle=\footnotesize\ttfamily, breaklines=true,numbers=left}%
\begin{lstlisting}[]
void png_check_chunk_length() {
  // set based on input file
  u32 width, height, colortype;
  
  /* constraints from libpng_read_fuzzer.cc */
  if (width < UINT_31_MAX) {
    if (width*height < 10^8) {
      u32 channels;
      switch(colortype) {
        case PALETTE: channels = 1; break
        case GRAY: channels = 2; break;
        case RGB: channels = 3; break;
        case ALPHA: channels = 4; break;
      }
      
      u32 row_factor = width * channels + 1;
      if (row_factor == 0) {
        // divide-by-zero bug target
      }
    }
  }
}
\end{lstlisting}
\vspace{-0.4cm}
\caption{\label{fig:achilles}\small Simplified code of Magma PNG001 (CVE-2018-13785).
}
\end{figure}

\noindent\textbf{Case Study. } We highlight a particular bug \texttt{PNG001} in Figure ~\ref{fig:achilles} found only by \ToolName{}. 
This bug is guarded by constraints and only a single input value \texttt{width=0x55555555} will cause a divide by zero when \texttt{row\_factor} overflows. We hypothesize that \texttt{AFLGo} did not trigger this bug in the time budget because the chance of producing this specific input value through mutations is small and fuzzer heuristics such as setting values to \texttt{MAX\_INT} also fail. In addition, we hypothesize that \texttt{ParmeSan} did not trigger this bug in the time budget because although it uses gradient descent and taint tracking to narrow down the input space, it cannot effectively reason about nested constraints (Lines 6 and 7). In contrast, \ToolName{} was able to successfully find this input value through noisy binary search. Moreover, upon manual source code analysis, we found this bug can only be triggered along an execution path that sets \texttt{channels=3}, showing that \ToolName{} was able to successfully reason across multiple execution paths.

\begin{table}[!]
\caption{ \label{tab:fts} \small\textbf{\small 
 Mean time to reach Fuzzer Test Suite targets for each tested fuzzer over 20 trials.
(x) refers to the speedup of \ToolName{} relative to the tested fuzzer. (p) refers to the p-value from the Mann-Whitney U test. $T.O^\star$ indicates 6 hour timeout.
We highlight targets only reached by \ToolName{} in \colorbox[HTML]{ECF4FF}{blue}.}}
    \vspace{-0.2cm}
\footnotesize
\centering
\setlength{\tabcolsep}{2.5pt}
\begin{tabular}{|l|c|ccr|ccr|}
\hline
\rowcolor[HTML]{EFEFEF}
 & \multicolumn{1}{c|}{\textbf{\ToolName{}}} & \multicolumn{3}{c|}{\textbf{\texttt{AFLGo}}} & \multicolumn{3}{c|}{\textbf{\texttt{ParmeSan}}}  \\
\rowcolor[HTML]{EFEFEF} \multirow{-2}{*}{\textbf{Bug ID}}
 & \multicolumn{1}{c|}{\textbf{Time}} & \multicolumn{1}{c}{\textbf{Time}}& \multicolumn{1}{c}{\textbf{$(x)$}} &\multicolumn{1}{c|}{\textbf{$(p)$}} & \multicolumn{1}{c}{\textbf{Time}} &  \multicolumn{1}{c}{\textbf{$(x)$}} & \multicolumn{1}{c|}{\textbf{$(p)$}}  \\
\hline
\texttt{ttgload.c:1710}  & 1s    & 1s    & 1x      & 0.07    & 1s    & 1x      & 0.07      \\
\texttt{ttinterp.c:2186} & 9m57s & $T.O^\star$ & $>$36x  & $<$0.01 & 20m   & 2x      & $<$0.01   \\
\texttt{cf2intrp.c:361}  & 58s   & 23m   & 23x     & $<$0.01 & $T.O^\star$ & $>$372x & $<$0.01   \\
\texttt{jdmarker.c:659}  & 32s   & 1h07m & 125x    & $<$0.01 & 5m    & 9x      & $<$0.01   \\
\texttt{pngrutil.c:139}  & 1s    & 1s    & 1x      & 0.07    & 1s    & 1x      & 0.07      \\
\texttt{pngrutil.c:3182} & 28s   & 2m30s & 5x      & $<$0.01 & 1m    & 2x      & $<$0.01   \\
\texttt{pngread.c:738}   & 1s    & 1s    & 1x      & 0.07    & 1s    & 1x      & 0.07      \\
\rowcolor[HTML]{ECF4FF}
\texttt{pngrutil.c:1393} & 51s   & $T.O^\star$ & $>$423x & $<$0.01 & $T.O^\star$ & $>$423x & $<$0.01   \\
\hline
\multicolumn{2}{|c|}{\textbf{Mean speedup}}   & & 77x & & & 102x & \\
\multicolumn{2}{|c|}{\textbf{Median speedup}} & & 15x  & & & 2x   & \\
\hline
\end{tabular}
\vspace{-0.3cm}
\end{table}

\vspace{0.1 cm}
\noindent\textbf{Fuzzer Test Suite Results. }
Table \ref{tab:fts} summarizes the results. \ToolName{} reaches targets 102x faster in arithmetic mean and 2x faster in median than \texttt{ParmeSan} and 77x faster in arithmetic mean and 15x faster in median than \texttt{AFLGo}, with statistical significance on all targets that were not reached within a few seconds. Moreover, \ToolName{} reaches 2 more targets compared to either \texttt{ParmeSan} or \texttt{AFLGo}. While cross-comparisons between papers is challenging due to stochasticity in fuzzers and hardware, our results are similar to prior work \cite{hawkeye, parmesan}, giving us confidence in our experimental setup of the tested fuzzers.

\vspace{0.3cm}
\begin{longfbox}
\textbf{Result 1:} 
Over the Magma benchmark, \ToolName{} finds bugs 134x faster in arithmetic mean and 38x faster in median compared to the next best fuzzer \texttt{AFLGo}. It also finds 28 bugs in total, 16 more than the next-best fuzzer \texttt{AFLGo}. 
\end{longfbox}

\subsection{RQ2: Bug Finding}
\label{sec:evaluationrq2}

\begin{table}[!]
\caption{ \label{tab:studied_programs}\small\textbf{Tested programs in bug finding experiments.}}
    \vspace{-0.2cm}
\centering
\footnotesize
\begin{tabular}{llr}
\toprule
\multicolumn{1}{c}{\textbf{Library}}    & \multicolumn{1}{c}{\textbf{Program}} & \multicolumn{1}{c}{\textbf{Version}} \\
\midrule
\texttt{libpng} & \texttt{libpng\_read\_fuzzer}     &  Commit a37d483...  \\
\texttt{poppler} & \texttt{pdf\_fuzzer}    & Commit 1d23101... \\
\texttt{binutils} & \texttt{nm -C}    & 2.36  \\
\texttt{binutils} & \texttt{objdump -xD}        & 2.36 \\
\texttt{openssl} & \texttt{x509}    & Commit 3bd5319...  \\
\texttt{libxml2} & \texttt{xmllint}    & Commit 07920b4... \\
\toprule
\end{tabular}
\vspace{-0.2cm}
\end{table}

For our bug finding experiments, we evaluate over programs based on prior work~\cite{fairfuzz, neuzz, ecofuzz, tfuzz, mopt, jung:fuzzification, tortoise} and Magma listed in Table~\ref{tab:studied_programs}. To find the targets for directed fuzzing, we re-use an idea from prior work~\cite{parmesan, hawkeye} and use \texttt{Undefined Behavior Sanitizer}~\cite{ubsan} to identify bug targets. This tool often reports a large number of bug targets, and if all are set as targets, the fuzzer effectively becomes a coverage-guided fuzzer instead of being directed. Instead, we randomly pick one target per function and run each fuzzer with these same targets over a 24 hour run. We start each fuzzer with the initial Magma corpus and a small set of valid ELF files. We report the total number of bugs found, repeating this experiment 10 times to minimize variability. 

\begin{table}[h!]
\caption{ \label{tab:bugs}\small\textbf{ Categorization of new bugs found by each fuzzer.}}
    \vspace{-0.2cm}
    \centering
    \footnotesize
\begin{tabular}{lrrr}
\toprule
\multicolumn{1}{c}{\textbf{Bug Type}}    & \multicolumn{1}{c}{\textbf{\texttt{ParmeSan}}} & \multicolumn{1}{c}{\textbf{\texttt{AFLGo}}} & \multicolumn{1}{c}{\textbf{\ToolName{}}} \\
\midrule
divide-by-zero     &  0    & 0    & 1    \\
denial-of-service     &  3    & 4    & 6    \\
stack/heap overflow       & 10     & 8    & 13     \\
integer overflow  & 21     & 17    & 29   \\
\midrule
Total             & 34     & 29    & 49  \\
\bottomrule
\end{tabular}
\end{table}

In our 24 hour runs, we found previously-unknown real-world bugs in \texttt{binutils}, \texttt{libxml2}, and \texttt{libpng}. Table~\ref{tab:bugs} summarizes the results in terms of bug type. Since counting the number of crashing inputs may inflate the bug count, we take the following approach to better compute the bug count based on prior work~\cite{redqueen, angora, neuzz}. We first use \texttt{AFL-CMin} to filter out duplicate crashing inputs, followed by another deduplication procedure based on unique stack traces. From this reduced set of inputs, we manually review the stack traces and corresponding source code to further deduplicate these inputs. We responsibly disclosed these bugs to the developers and all bugs were confirmed, most of which have been fixed in the latest versions of the programs. Our results show that \ToolName{} finds 15 more bugs than the next best fuzzer \texttt{ParmeSan}.

\begin{longfbox}
\textbf{Result 2:} 
\ToolName{} finds 49 previously-unknown real world bugs, 15 more than the next best fuzzer \texttt{ParmeSan}.  
\end{longfbox}

\subsection{RQ3: Performance Overhead}
\label{sec:evaluationrq3}
\begin{table}[!]
\caption{\label{tab:runtimeoverheads}\small\textbf{ \Execution \ Overheads relative to native (uninstrumented) and fuzzer-instrumented execution over Magma.  }}
    \vspace{-0.2cm}
    \centering
    \footnotesize
\setlength{\tabcolsep}{2.6pt}
\begin{tabular}{c|rr|rr}
   \toprule
    \multicolumn{1}{c}{\multirow{2}{*}{\textbf{Library}}} & \multicolumn{2}{c}{\textbf{\texttt{\ToolName{} vs Native}}} & \multicolumn{2}{c}{\textbf{\texttt{\ToolName{} vs Fuzzer (AFLGo)}}}  \\ \cline{2-5} 
\multicolumn{1}{c}{}& \multicolumn{1}{c}{\textbf {Runtime}} & \textbf{Memory}     & \multicolumn{1}{c}{\textbf {Runtime}} & \textbf{Memory}  \\    
\midrule
\texttt{libpng}     & 94\%  &  30\%     & 26\% & 4\%    \\
\texttt{libtiff}    & 78\%    &  2\%    & 16\%  &  1\% \\
\texttt{libxml2}    & 135\%  &    37\%  & 38\%  & 8\% \\
\texttt{openssl}    & 117\%   &  15\%    & 42\%  & 6\%  \\
\texttt{php}        & 86\%   &   9\%   & 29\% & 4\%   \\
\texttt{poppler}    & 87\%   &  10\%   & 25\%  & 3\%  \\
\texttt{sqlite3}    & 136\%  &   7\%  & 34\%  & 4\% \\
\midrule
\multicolumn{1}{c|}{Arithmetic mean} & 105\% & 16\% &  30\% & 4\% \\
\multicolumn{1}{c|}{Median } & 94\% & 10\% & 29\% & 4\%\\
\bottomrule
\end{tabular}
\vspace{-0.3cm}
\end{table}

\begin{table}[!]
     \caption{\label{tab:fuzzmapsize}\small\textbf{ \ToolName{}'s data structures size in MBs over Magma benchmark.}}
         \vspace{-0.2cm}
    \centering
    \footnotesize
\begin{tabular}{cc}
\toprule
\multicolumn{1}{c}{\textbf{Library}}    & \multicolumn{1}{c}{\textbf{Data Structures Size (MBs)}}  \\
\midrule
\texttt{libpng}     & 12.1    \\
\texttt{libtiff}    & 21.8    \\
\texttt{libxml2}    & 1.6     \\
\texttt{openssl}    & 59.7    \\
\texttt{php}        & 1.6     \\
\texttt{poppler}    & 28.1    \\
\texttt{sqlite3}    & 20.8    \\
\midrule
\multicolumn{1}{c|}{Arithmetic mean  } & 20.9 \\
\multicolumn{1}{c|}{Median } & 20.8 \\
\bottomrule
\end{tabular}
\vspace{-0.5cm}
\end{table}

Since instrumented target program executions dominate the fuzzing overhead~\cite{fullspeedfuzzing}, we evaluate the performance overhead of \Execution \ relative to native (uninstrumented) execution as well as a fuzzer-instrumented execution that tracks edge coverage and distance (i.e., \texttt{AFLGo}). We run the Magma programs over the initial seed corpus inputs and take the arithmetic mean of the results from 10 independent trials. In addition, we measure the total memory footprint of \ToolName{}'s data structures (e.g., \Graph \ and weight groups in Algorithm \ref{alg:noisyfuzzer}) by re-running our Magma evaluation and tracking the total memory consumed in MBs, reporting the arithmetic mean over 10 independent trials. 

Table \ref{tab:runtimeoverheads} summarize the performance overheads of \Execution. \Execution \ adds runtime overheads of 105\% in arithmetic mean and 94\% in median as well as memory overheads of 16\% in arithmetic mean and 10\% in median relative to native execution. Relative to a fuzzer-instrumented execution, the overheads are smaller: runtime overheads of 30\% in arithmetic mean and 29\% in median as well as memory overheads of 4\% in arithmetic mean and 4\% in median. We attribute the additional memory and runtime overheads to computing the (streaming) mean and variance for each branch, which requires additional memory as well as floating point arithmetic. 

We also summarize the memory footprint: 20.9 MBs in arithmetic mean and 20.8 MBs in median ($< 1$ GB) with full details in Table \ref{tab:fuzzmapsize}. These results show that the data structures do not consume large amounts of memory. Note that \ToolName{}, a prototype, still consistently outperforms other fuzzers despite this overhead, showing the promise of our technique. Nonetheless, we believe there are still ways to further cut down our prototype's overhead.

\vspace{0.3cm}
\begin{longfbox}
\textbf{Result 3:} 
\ToolName{} adds 30\% runtime and 4\% memory overhead in arithmetic mean relative to a fuzzer's instrumentation and 105\% runtime and 16\% memory overhead in arithmetic mean relative to native execution. In addition, \ToolName{} data structures consume $< 1$ GB of memory. 

\end{longfbox}
\vspace{-0.4cm}
\subsection{RQ4: Design Choices}
\label{sec:evaluationrq4}

We conduct experiments to measure the effect of three design choices:  (i) Chebyshev's inequality for uniconstraint counts, (ii) using the minimum uniconstraint count, and (iii) path selection.  

For each design choice experiment, we run \ToolName{} on a representative subset from the Magma benchmark, repeated 10 times. To form a representative subset, we pick 3 bugs randomly from three categories: bugs found within 60 seconds, bugs found more than 120 seconds, and bugs found between these times. Our subset includes at least one bug from each library in Magma. Moreover, it includes bugs that only \ToolName{} triggers as well as other tested fuzzers trigger.  We describe each design choice experiment in more detail below.

\subsubsection{Chebyshev's Inequality for Uniconstraint Counts}
In this experiment, we compare our Chebyshev-based technique to compute probabilistic upper bounds on $r$ (i.e., see Equation~\ref{equation:uniconstraintcount} in Section~\ref{section:methodologycounting})  against alternate techniques when $r=0$ (i.e., zero uniconstraint counts). Specifically, we compare against the Rule-of-3 and Good-Turing techniques from the Natural Language Processing and Biostatistics literature~\cite{smoothing, goodturingbiostatistics}, which have also been used in prior work in fuzzing~\cite{ruleofthree, goodturingmarcel}. In contrast to our probabilistic upper bounds which use mean and variance information, these methods upper bound $r$ by computing $r=\frac{3}{N}$ (Rule-of-3) or the smallest non-zero $r$ across all branches (Good-Turing) via $r=min(\{r_{E_1}, r_{E_2}, ..., r_{E_T} \text{ such that } r_{E_i} \neq 0\})$. 

\begin{table}[!]
\caption{\label{tab:aba_smoothing}\small\textbf{ Mean time to trigger the bug across various techniques to approximate uniconstraint counts over 10 trials.}}
\vspace{-0.2cm}
\centering
\footnotesize
\begin{tabular}{lccc}
\toprule
\multicolumn{1}{l}{\textbf{Bug ID}}   & \multicolumn{1}{c}{\textbf{\ToolName{}}}   & \multicolumn{1}{c}{\textbf{Rule-Of-3}}& \multicolumn{1}{c}{\textbf{Good-Turing}} \\
\midrule
\texttt{XML009}  & 13s  & $T.O^\star$ & $T.O^\star$     \\
\texttt{PNG003}  & 15s  & $T.O^\star$ & $T.O^\star$     \\
\texttt{XML017} & 16s   & 2m17s & 1m54s     \\
\texttt{PHP004}  & 1m04s  & 15m8s & 10m40s  \\
\texttt{PDF011} & 1m41s   & $T.O^\star$ & $T.O^\star$   \\
\texttt{PHP009}  & 1m07s  & $T.O^\star$ & $T.O^\star$   \\
\texttt{SSL020}  & 9m16s  & $T.O^\star$ & $T.O^\star$   \\
\texttt{TIF009}   & 9m49s  & $T.O^\star$ & $T.O^\star$  \\
\texttt{PDF008}   & 3m21s  & $T.O^\star$ & $T.O^\star$  \\
\midrule
\multicolumn{2}{c|}{Arithmetic mean speedup } & 427x & 426x  \\
\multicolumn{2}{c|}{Median speedup}           & 107x & 107x  \\
        \bottomrule
\end{tabular}
\vspace{-0.2cm}
\end{table}
Table \ref{tab:aba_smoothing} summarizes the results. \ToolName{} improves upon the next-best technique Good-Turing by 426x in arithmetic mean and 107x in median. Our results highlight the importance of probabilistic upper bounds in \ToolName{}. 

\subsubsection{Minimum Uniconstraint Count}

In Section \ref{section:methodologycounting}, we placed an upper bound on the count of inputs that reach the target along an execution path for a given input region using the minimum uniconstraint count. In this experiment, we compare our technique which uses information from a single uniconstraint count with an alternate one that incorporates information from all uniconstraint counts by multiplying them. 
\begin{table}[t!]
     \caption{\label{tab:aba_minbranch}\small\textbf{ Mean time to trigger the bug across various techniques to approximate path counts over 10 trials.}}
         \vspace{-0.2cm}
    \centering
    \footnotesize
\begin{tabular}{lcc}
\toprule
\multicolumn{1}{c}{\textbf{Bug ID}}  & \multicolumn{1}{c}{\textbf{\ToolName{}}}   & \multicolumn{1}{c}{\textbf{Multiply Uniconstraint Counts}}  \\
\midrule
\texttt{XML009}  & 13s    & 5m59s      \\
\texttt{PNG003}   & 15s    & 6m15s     \\
\texttt{XML017}   & 16s   & 1m31s      \\
\texttt{PHP004}    & 1m04s  & 13m01s  \\
\texttt{PDF011}   & 1m41s & 34m31s    \\
\texttt{PHP009}    & 1m07s  & 7m16s   \\
\texttt{SSL020}  & 9m16s   & 1h10m26s \\
\texttt{TIF009}    & 9m49s  & 19m09s  \\
\texttt{PDF008}   & 3m21s  & 56m57s   \\
\midrule
\multicolumn{2}{c|}{Arithmetic mean speedup } & 13x \\
\multicolumn{2}{c|}{Median speedup} & 12x \\
\bottomrule
\end{tabular}
\vspace{-0.3cm}
\end{table}

Table \ref{tab:aba_minbranch} summarizes the results. \ToolName{} improves upon the multiply uniconstraint counts technique by 13x in arithmetic mean and 12x in median, showing the utility of approximating the count along an execution path using the minimum uniconstraint count. We hypothesize this improvement occurs because multiplying uniconstraint counts to approximate the count along a path corresponds to an independence assumption between individual constraints (i.e., the branch constraints share no variables and hence the counts are independent), which is generally not true for most real-world programs, as shown in the symbolic execution literature~\cite{dart, klee, Godefroid2008AutomatedWF, qsym, driller, savior}.

\subsubsection{Path Selection}
We discuss in Section \ref{section:methodologycounting} the importance of selecting alternate paths with large counts due to approximation error, leading us to use the uncertainty term from the multi-armed bandit literature~\cite{multiarmed} in Algorithm \ref{alg:noisycumulativecountoracle}. In this experiment, we compare against alternate strategies based on the multi-armed bandit literature. We compare against a strategy that sets the uncertainty term to zero and greedily picks the path with the largest count (Greedy). We also compare against a variant called Epsilon-greedy that also sets the uncertainty term to zero but instead of following Greedy all the time, it randomly selects another path based on a coin flip with bias $\epsilon$, set to $\epsilon=0.5$ to equally balance the trade-off.

\begin{table}[t]
     \caption{\label{tab:aba_path}\small\textbf{ Mean time to trigger the bug across various techniques for path selection over 10 trials.}}
         \vspace{-0.2cm}
    \centering
    \footnotesize
\begin{tabular}{lccc}
\toprule
\multicolumn{1}{c}{\textbf{Bug ID}}     & \multicolumn{1}{c}{\textbf{\ToolName{}}}& \multicolumn{1}{c}{\textbf{Epsilon-greedy}} & \multicolumn{1}{c}{\textbf{Greedy}} \\
\midrule
\texttt{XML009}    & 13s   & 9s     & 11s      \\
\texttt{PNG003}    & 15s   & 38s    & 23s      \\
\texttt{XML017}    & 16s   & 11s    & 13s      \\
\texttt{PHP004}    & 1m04s & 2m08s  & 3m12s    \\
\texttt{PDF011}    & 1m41s & 2m06s  & 1m03s    \\
\texttt{PHP009}    & 1m07s & 2m14s  & 3m21s    \\
\texttt{SSL020}    & 9m16s & 14m50s & 4h38m    \\
\texttt{TIF009}    & 9m49s & 15m42s & 4h54m30s \\
\texttt{PDF008}    & 3m21s & 5m22s  & 1h40m30s \\
\midrule
\multicolumn{2}{c|}{Arithmetic mean speedup } & 1.5x & 11x \\
\multicolumn{2}{c|}{Median speedup} & 1.6x & 3x \\
\bottomrule
\end{tabular}
    \vspace{-3mm}
\end{table}
Table \ref{tab:aba_path} summarizes the results. While \ToolName{} improves upon Greedy by 11x on average and 3x in median, it only improves upon Epsilon-greedy by 1.5x on average and 1.6x in median. Our results show the utility of selecting alternate paths to reflect our uncertainty, but also indicate that simple strategies such as Epsilon-greedy can work as well as more advanced ones that incorporate an uncertainty correction factor. 
\vspace{0.3cm}

\begin{longfbox}
\textbf{Result 4:} 
Our experimental results justify \ToolName{}'s design choices with speedups  $\geq1.5x$ in arithmetic mean and $\geq1.6x$ in median. 
\end{longfbox}
\vspace{-0.5cm}

\section{Related Work}

\noindent\textbf{Approximate Counting.}
Approximate counting has been used in many different contexts including counting the number of solutions to SAT formulas~\cite{karpdnf, kuldeep}, flash memory~\cite{approximatecountingflashmemory}, and database systems~\cite{approximatecountingdatabase}. Techniques for approximate counting build upon Monte Carlo counting as well as universal hash functions \cite{kuldeep}, which provide the property of uniformly partitioning each object to be counted into roughly equally-sized groups. We plan to investigate incorporating such techniques in the future. 

Recently, approximate counting was also used in seed scheduling for coverage-guided fuzzing. She et al. approximate the count of reachable and feasible edges using graph centrality~\cite{kscheduler}. In contrast, we approximate the count of inputs that reach the target using Monte Carlo counting for directed greybox fuzzing. Generalizing \ToolName{} from directed greybox fuzzing to the coverage-guided fuzzer setting remains an open question for future work and potentially may involve information entropy from B{\"{o}}hme et al.~\cite{entropic} or abstraction functions from Salls et al.~\cite{abstractionfunctions}.

\noindent\textbf{Directed Greybox Fuzzing. }
Starting with the promising results of \texttt{AFLGo}: finding the HeartBleed vulnerability orders-of-magnitude faster than a directed whitebox fuzzer~\cite{aflgo}, directed greybox fuzzing has seen multiple research directions. One line of work incorporates additional information into the distance computations such as branch distance~\cite{cafl} or function similarity~\cite{hawkeye}. In contrast, \ToolName{} uses noisy binary search and approximate counts, not distance, to guide the fuzzer. 

Based on the observation that directed greybox fuzzers consume a lot of time on executions that fail to reach the target, another promising line of work seeks to increase the fuzzer's efficiency by not executing on inputs that are either unlikely to reach the target~\cite{fuzzguard} or provably cannot~\cite{beacon}. Our approach is complementary to such techniques as we can potentially use them to bias our random input selection process to avoid such inputs. Recent work has also directed a fuzzer with application-specific techniques~\cite{parmesan, regressiongreybox, androiddirected, lineartemporallogic} and incorporating such application-specific techniques is an interesting question for future work. 
\vspace{-0.2cm}

\section{Conclusion}
In this paper, we build an asymptotically optimal directed greybox fuzzer using noisy binary search and a noisy counting oracle.  We also empirically show the promise of our fuzzer as it outperforms existing directed greybox fuzzers by up to two orders of magnitude, on average, over Magma and Fuzzer Test Suite.

\section*{Acknowledgements} 

We thank Clayton Sanford, Samuel Deng, Andreas Kellas, Amol Pasarkar, Dennis Roelke, Gabriel Ryan, Zhongtian Chen, Yuhao Li, Ming Yuan, Christian Kroer, and Junfeng Yang for their helpful comments, and the reviewers for their valuable feedback. 
Peter Coffman helped create tables, improve code quality, and optimize the implementation. 
Abhishek Shah is supported by an NSF Graduate Fellowship. 
This work is supported partially by NSF grants CNS-18-42456, CNS-18-01426; a NSF CAREER award; a Google Faculty Fellowship; a JP Morgan Faculty Fellowship; a Capital One Research Grant; and an Institute of Information \& Communications Technology Planning \& Evaluation (IITP) grant funded by the Korea Government (MSIT)
(No.2020-0-00153).
Any opinions, findings, conclusions, or recommendations expressed herein are those of the authors, and do not necessarily reflect those of the US Government, NSF, Google, Capital One, J.P. Morgan, or the Korean Government.

\bibliographystyle{ACM-Reference-Format}
\balance
\bibliography{ref}


\begin{thebibliography}{70}


\ifx \showCODEN    \undefined \def \showCODEN     #1{\unskip}     \fi
\ifx \showDOI      \undefined \def \showDOI       #1{#1}\fi
\ifx \showISBNx    \undefined \def \showISBNx     #1{\unskip}     \fi
\ifx \showISBNxiii \undefined \def \showISBNxiii  #1{\unskip}     \fi
\ifx \showISSN     \undefined \def \showISSN      #1{\unskip}     \fi
\ifx \showLCCN     \undefined \def \showLCCN      #1{\unskip}     \fi
\ifx \shownote     \undefined \def \shownote      #1{#1}          \fi
\ifx \showarticletitle \undefined \def \showarticletitle #1{#1}   \fi
\ifx \showURL      \undefined \def \showURL       {\relax}        \fi
\providecommand\bibfield[2]{#2}
\providecommand\bibinfo[2]{#2}
\providecommand\natexlab[1]{#1}
\providecommand\showeprint[2][]{arXiv:#2}

\bibitem[max(2022a)]%
        {maxlimitaflgo}
 \bibinfo{year}{2022}\natexlab{a}.
\newblock \bibinfo{title}{{AFLGo Max Input Size}}.
\newblock
  \bibinfo{howpublished}{\url{https://github.com/aflgo/aflgo/blob/b170fad54396f376160befd87adbba28b27c15d9/config.h\#L142}}.
\newblock


\bibitem[fuz(2022)]%
        {fuzzerTestSuite}
 \bibinfo{year}{2022}\natexlab{}.
\newblock \bibinfo{title}{{Fuzzer Test Suite}}.
\newblock
  \bibinfo{howpublished}{\url{https://github.com/google/fuzzer-test-suite}}.
\newblock


\bibitem[max(2022b)]%
        {maxlimitparmesan}
 \bibinfo{year}{2022}\natexlab{b}.
\newblock \bibinfo{title}{{ParmeSan Max Input Size}}.
\newblock
  \bibinfo{howpublished}{\url{https://github.com/vusec/parmesan/blob/fac580130146c07a2a0f82a24dfe0704e1851ab3/common/src/config.rs\#L12}}.
\newblock


\bibitem[ubs(2022)]%
        {ubsan}
 \bibinfo{year}{2022}\natexlab{}.
\newblock \bibinfo{title}{{Undefined Behavior Sanitizer.}}
\newblock
  \bibinfo{howpublished}{\url{https://clang.llvm.org/docs/UndefinedBehaviorSanitizer.html}}.
\newblock


\bibitem[Aaronson et~al\mbox{.}(2021)]%
        {qc2}
\bibfield{author}{\bibinfo{person}{Scott Aaronson}, \bibinfo{person}{Andris
  Ambainis}, \bibinfo{person}{Andrej Bogdanov}, \bibinfo{person}{Krishnamoorthy
  Dinesh}, {and} \bibinfo{person}{Cheung~Tsun Ming}.}
  \bibinfo{year}{2021}\natexlab{}.
\newblock \showarticletitle{On Quantum Versus Classical Query Complexity}.
\newblock \bibinfo{journal}{\emph{Electron. Colloquium Comput. Complex.}}
  (\bibinfo{year}{2021}), \bibinfo{pages}{115}.
\newblock


\bibitem[Ambainis(2018)]%
        {qc1}
\bibfield{author}{\bibinfo{person}{Andris Ambainis}.}
  \bibinfo{year}{2018}\natexlab{}.
\newblock \showarticletitle{Understanding Quantum Algorithms via Query
  Complexity}. In \bibinfo{booktitle}{\emph{International Congress of
  Mathematicians: Rio de Janeiro}}. World Scientific,
  \bibinfo{pages}{3265--3285}.
\newblock


\bibitem[Arora et~al\mbox{.}(2012)]%
        {mwua}
\bibfield{author}{\bibinfo{person}{Sanjeev Arora}, \bibinfo{person}{Elad
  Hazan}, {and} \bibinfo{person}{Satyen Kale}.}
  \bibinfo{year}{2012}\natexlab{}.
\newblock \showarticletitle{The Multiplicative Weights Update Method: a
  Meta-Algorithm and Applications}.
\newblock \bibinfo{journal}{\emph{Theory of Computing}} (\bibinfo{year}{2012}),
  \bibinfo{pages}{121--164}.
\newblock


\bibitem[Aschermann et~al\mbox{.}(2019)]%
        {redqueen}
\bibfield{author}{\bibinfo{person}{Cornelius Aschermann},
  \bibinfo{person}{Sergej Schumilo}, \bibinfo{person}{Tim Blazytko},
  \bibinfo{person}{Robert Gawlik}, {and} \bibinfo{person}{Thorsten Holz}.}
  \bibinfo{year}{2019}\natexlab{}.
\newblock \showarticletitle{{REDQUEEN:} Fuzzing with Input-to-State
  Correspondence}. In \bibinfo{booktitle}{\emph{26th Annual Network and
  Distributed System Security Symposium}}. \bibinfo{publisher}{The Internet
  Society}.
\newblock


\bibitem[Astrahan et~al\mbox{.}(1987)]%
        {approximatecountingdatabase}
\bibfield{author}{\bibinfo{person}{Morton~M. Astrahan}, \bibinfo{person}{Mario
  Schkolnick}, {and} \bibinfo{person}{Kyu{-}Young Whang}.}
  \bibinfo{year}{1987}\natexlab{}.
\newblock \showarticletitle{Approximating the Number of Unique Values of an
  Attribute Without Sorting}.
\newblock \bibinfo{journal}{\emph{Inf. Syst.}} \bibinfo{volume}{12},
  \bibinfo{number}{1} (\bibinfo{year}{1987}), \bibinfo{pages}{11--15}.
\newblock


\bibitem[Ben{-}Or and Hassidim(2008)]%
        {noisybinarysearchorr}
\bibfield{author}{\bibinfo{person}{Michael Ben{-}Or} {and}
  \bibinfo{person}{Avinatan Hassidim}.} \bibinfo{year}{2008}\natexlab{}.
\newblock \showarticletitle{The Bayesian Learner is Optimal for Noisy Binary
  Search (and Pretty Good for Quantum as Well)}. In
  \bibinfo{booktitle}{\emph{49th Annual {IEEE} Symposium on Foundations of
  Computer Science}}. \bibinfo{publisher}{{IEEE} Computer Society},
  \bibinfo{pages}{221--230}.
\newblock


\bibitem[B{\"{o}}hme et~al\mbox{.}(2021)]%
        {goodturingmarcel}
\bibfield{author}{\bibinfo{person}{Marcel B{\"{o}}hme},
  \bibinfo{person}{Danushka Liyanage}, {and} \bibinfo{person}{Valentin
  W{\"{u}}stholz}.} \bibinfo{year}{2021}\natexlab{}.
\newblock \showarticletitle{Estimating Residual Risk in Greybox Fuzzing}. In
  \bibinfo{booktitle}{\emph{29th {ACM} Joint European Software Engineering
  Conference and Symposium on the Foundations of Software Engineering}}.
  \bibinfo{publisher}{{ACM}}, \bibinfo{pages}{230--241}.
\newblock


\bibitem[B{\"{o}}hme et~al\mbox{.}(2020)]%
        {entropic}
\bibfield{author}{\bibinfo{person}{Marcel B{\"{o}}hme},
  \bibinfo{person}{Valentin J.~M. Man{\`{e}}s}, {and} \bibinfo{person}{Sang~Kil
  Cha}.} \bibinfo{year}{2020}\natexlab{}.
\newblock \showarticletitle{Boosting fuzzer efficiency: an information
  theoretic perspective}. In \bibinfo{booktitle}{\emph{28th {ACM} Joint
  European Software Engineering Conference and Symposium on the Foundations of
  Software Engineering}}. \bibinfo{publisher}{{ACM}},
  \bibinfo{pages}{678--689}.
\newblock


\bibitem[B{\"{o}}hme et~al\mbox{.}(2017)]%
        {aflgo}
\bibfield{author}{\bibinfo{person}{Marcel B{\"{o}}hme},
  \bibinfo{person}{Van{-}Thuan Pham}, \bibinfo{person}{Manh{-}Dung Nguyen},
  {and} \bibinfo{person}{Abhik Roychoudhury}.} \bibinfo{year}{2017}\natexlab{}.
\newblock \showarticletitle{Directed Greybox Fuzzing}. In
  \bibinfo{booktitle}{\emph{{ACM} {SIGSAC} Conference on Computer and
  Communications Security}}. \bibinfo{publisher}{{ACM}},
  \bibinfo{pages}{2329--2344}.
\newblock


\bibitem[Cadar et~al\mbox{.}(2008)]%
        {klee}
\bibfield{author}{\bibinfo{person}{Cristian Cadar}, \bibinfo{person}{Daniel
  Dunbar}, {and} \bibinfo{person}{Dawson~R. Engler}.}
  \bibinfo{year}{2008}\natexlab{}.
\newblock \showarticletitle{{KLEE:} Unassisted and Automatic Generation of
  High-Coverage Tests for Complex Systems Programs}. In
  \bibinfo{booktitle}{\emph{8th {USENIX} Symposium on Operating Systems Design
  and Implementation}}. \bibinfo{publisher}{{USENIX} Association},
  \bibinfo{pages}{209--224}.
\newblock


\bibitem[Chakraborty et~al\mbox{.}(2013)]%
        {kuldeep}
\bibfield{author}{\bibinfo{person}{Supratik Chakraborty},
  \bibinfo{person}{Kuldeep~S. Meel}, {and} \bibinfo{person}{Moshe~Y. Vardi}.}
  \bibinfo{year}{2013}\natexlab{}.
\newblock \showarticletitle{A Scalable Approximate Model Counter}. In
  \bibinfo{booktitle}{\emph{Principles and Practice of Constraint Programming -
  19th International Conference}} \emph{(\bibinfo{series}{Lecture Notes in
  Computer Science}, Vol.~\bibinfo{volume}{8124})}.
  \bibinfo{publisher}{Springer}, \bibinfo{pages}{200--216}.
\newblock


\bibitem[Chen et~al\mbox{.}(2018)]%
        {hawkeye}
\bibfield{author}{\bibinfo{person}{Hongxu Chen}, \bibinfo{person}{Yinxing Xue},
  \bibinfo{person}{Yuekang Li}, \bibinfo{person}{Bihuan Chen},
  \bibinfo{person}{Xiaofei Xie}, \bibinfo{person}{Xiuheng Wu}, {and}
  \bibinfo{person}{Yang Liu}.} \bibinfo{year}{2018}\natexlab{}.
\newblock \showarticletitle{Hawkeye: Towards a Desired Directed Grey-box
  Fuzzer}. In \bibinfo{booktitle}{\emph{{ACM} {SIGSAC} Conference on Computer
  and Communications Security}}. \bibinfo{publisher}{{ACM}},
  \bibinfo{pages}{2095--2108}.
\newblock


\bibitem[Chen and Chen(2018)]%
        {angora}
\bibfield{author}{\bibinfo{person}{Peng Chen} {and} \bibinfo{person}{Hao
  Chen}.} \bibinfo{year}{2018}\natexlab{}.
\newblock \showarticletitle{Angora: Efficient Fuzzing by Principled Search}.
\newblock \bibinfo{journal}{\emph{{IEEE} Symposium on Security and Privacy}}
  (\bibinfo{year}{2018}), \bibinfo{pages}{711--725}.
\newblock


\bibitem[Chen and Goodman(1996)]%
        {smoothing}
\bibfield{author}{\bibinfo{person}{Stanley~F. Chen} {and}
  \bibinfo{person}{Joshua Goodman}.} \bibinfo{year}{1996}\natexlab{}.
\newblock \showarticletitle{An Empirical Study of Smoothing Techniques for
  Language Modeling}. In \bibinfo{booktitle}{\emph{34th Annual Meeting of the
  Association for Computational Linguistics}}. \bibinfo{publisher}{Morgan
  Kaufmann Publishers / {ACL}}, \bibinfo{pages}{310--318}.
\newblock


\bibitem[Chen et~al\mbox{.}(2020)]%
        {savior}
\bibfield{author}{\bibinfo{person}{Yaohui Chen}, \bibinfo{person}{Peng Li},
  \bibinfo{person}{Jun Xu}, \bibinfo{person}{Shengjian Guo},
  \bibinfo{person}{Rundong Zhou}, \bibinfo{person}{Yulong Zhang},
  \bibinfo{person}{Tao Wei}, {and} \bibinfo{person}{Long Lu}.}
  \bibinfo{year}{2020}\natexlab{}.
\newblock \showarticletitle{{SAVIOR:} Towards Bug-Driven Hybrid Testing}. In
  \bibinfo{booktitle}{\emph{{IEEE} Symposium on Security and Privacy}}.
  \bibinfo{publisher}{{IEEE}}, \bibinfo{pages}{1580--1596}.
\newblock


\bibitem[Cichon and Macyna(2011)]%
        {approximatecountingflashmemory}
\bibfield{author}{\bibinfo{person}{Jacek Cichon} {and}
  \bibinfo{person}{Wojciech Macyna}.} \bibinfo{year}{2011}\natexlab{}.
\newblock \showarticletitle{Approximate Counters for Flash Memory}. In
  \bibinfo{booktitle}{\emph{17th {IEEE} International Conference on Embedded
  and Real-Time Computing Systems and Applications}}.
  \bibinfo{publisher}{{IEEE} Computer Society}, \bibinfo{pages}{185--189}.
\newblock


\bibitem[Dereniowski et~al\mbox{.}(2021)]%
        {noisybinarysearchgraph}
\bibfield{author}{\bibinfo{person}{Dariusz Dereniowski},
  \bibinfo{person}{Aleksander Lukasiewicz}, {and} \bibinfo{person}{Przemyslaw
  Uznanski}.} \bibinfo{year}{2021}\natexlab{}.
\newblock \showarticletitle{An Efficient Noisy Binary Search in Graphs via
  Median Approximation}. In \bibinfo{booktitle}{\emph{Combinatorial Algorithms
  - 32nd International Workshop}} \emph{(\bibinfo{series}{Lecture Notes in
  Computer Science}, Vol.~\bibinfo{volume}{12757})}.
  \bibinfo{publisher}{Springer}, \bibinfo{pages}{265--281}.
\newblock


\bibitem[Dubhashi and Panconesi(2009)]%
        {randomizedalgobookconcentration}
\bibfield{author}{\bibinfo{person}{Devdatt~P. Dubhashi} {and}
  \bibinfo{person}{Alessandro Panconesi}.} \bibinfo{year}{2009}\natexlab{}.
\newblock \bibinfo{booktitle}{\emph{Concentration of Measure for the Analysis
  of Randomized Algorithms}}.
\newblock \bibinfo{publisher}{Cambridge University Press}.
\newblock


\bibitem[Flajolet(1985)]%
        {infotheorybits}
\bibfield{author}{\bibinfo{person}{Philippe Flajolet}.}
  \bibinfo{year}{1985}\natexlab{}.
\newblock \showarticletitle{Approximate Counting: {A} Detailed Analysis}.
\newblock \bibinfo{journal}{\emph{{BIT}}} \bibinfo{volume}{25},
  \bibinfo{number}{1} (\bibinfo{year}{1985}), \bibinfo{pages}{113--134}.
\newblock


\bibitem[Fratta and Montanari(1975)]%
        {simplepaths}
\bibfield{author}{\bibinfo{person}{Luigi Fratta} {and} \bibinfo{person}{Ugo
  Montanari}.} \bibinfo{year}{1975}\natexlab{}.
\newblock \showarticletitle{A Vertex Elimination Algorithm for Enumerating all
  Simple Paths in a Graph}.
\newblock \bibinfo{journal}{\emph{Networks}} \bibinfo{volume}{5},
  \bibinfo{number}{2} (\bibinfo{year}{1975}), \bibinfo{pages}{151--177}.
\newblock


\bibitem[Gan et~al\mbox{.}(2020)]%
        {greyone}
\bibfield{author}{\bibinfo{person}{Shuitao Gan}, \bibinfo{person}{Chao Zhang},
  \bibinfo{person}{Peng Chen}, \bibinfo{person}{Bodong Zhao},
  \bibinfo{person}{Xiaojun Qin}, \bibinfo{person}{Dong Wu}, {and}
  \bibinfo{person}{Zuoning Chen}.} \bibinfo{year}{2020}\natexlab{}.
\newblock \showarticletitle{{GREYONE:} Data Flow Sensitive Fuzzing}. In
  \bibinfo{booktitle}{\emph{29th {USENIX} Security Symposium}}.
  \bibinfo{publisher}{{USENIX} Association}, \bibinfo{pages}{2577--2594}.
\newblock


\bibitem[Godefroid et~al\mbox{.}(2005)]%
        {dart}
\bibfield{author}{\bibinfo{person}{Patrice Godefroid}, \bibinfo{person}{Nils
  Klarlund}, {and} \bibinfo{person}{Koushik Sen}.}
  \bibinfo{year}{2005}\natexlab{}.
\newblock \showarticletitle{{DART:} Directed Automated Random Testing}.
\newblock \bibinfo{journal}{\emph{{ACM} {SIGPLAN} 2005 Conference on
  Programming Language Design and Implementation}} (\bibinfo{year}{2005}),
  \bibinfo{pages}{213--223}.
\newblock


\bibitem[Godefroid et~al\mbox{.}(2008)]%
        {Godefroid2008AutomatedWF}
\bibfield{author}{\bibinfo{person}{Patrice Godefroid},
  \bibinfo{person}{Michael~Y. Levin}, {and} \bibinfo{person}{David~A. Molnar}.}
  \bibinfo{year}{2008}\natexlab{}.
\newblock \showarticletitle{Automated Whitebox Fuzz Testing}. In
  \bibinfo{booktitle}{\emph{Network and Distributed System Security
  Symposium}}. \bibinfo{publisher}{The Internet Society}.
\newblock


\bibitem[Hazimeh et~al\mbox{.}(2020)]%
        {magma}
\bibfield{author}{\bibinfo{person}{Ahmad Hazimeh}, \bibinfo{person}{Adrian
  Herrera}, {and} \bibinfo{person}{Mathias Payer}.}
  \bibinfo{year}{2020}\natexlab{}.
\newblock \showarticletitle{Magma: A Ground-Truth Fuzzing Benchmark}.
\newblock \bibinfo{journal}{\emph{ACM Measurement and Analysis of Computing
  Systems}} (\bibinfo{year}{2020}), \bibinfo{pages}{49:1--49:29}.
\newblock
Issue 3.
\showISSN{2476-1249}


\bibitem[Herrera et~al\mbox{.}(2021)]%
        {seed_select}
\bibfield{author}{\bibinfo{person}{Adrian Herrera}, \bibinfo{person}{Hendra
  Gunadi}, \bibinfo{person}{Shane Magrath}, \bibinfo{person}{Michael Norrish},
  \bibinfo{person}{Mathias Payer}, {and} \bibinfo{person}{Antony~L. Hosking}.}
  \bibinfo{year}{2021}\natexlab{}.
\newblock \showarticletitle{Seed Selection for Successful Fuzzing}. In
  \bibinfo{booktitle}{\emph{30th ACM SIGSOFT International Symposium on
  Software Testing and Analysis}}. \bibinfo{publisher}{Association for
  Computing Machinery}.
\newblock


\bibitem[Huang et~al\mbox{.}(2022)]%
        {beacon}
\bibfield{author}{\bibinfo{person}{Heqing Huang}, \bibinfo{person}{Yiyuan Guo},
  \bibinfo{person}{Qingkai Shi}, \bibinfo{person}{Peisen Yao},
  \bibinfo{person}{Rongxin Wu}, {and} \bibinfo{person}{Charles Zhang}.}
  \bibinfo{year}{2022}\natexlab{}.
\newblock \showarticletitle{Beacon: Directed Grey-Box Fuzzing with Provable
  Path Pruning}. In \bibinfo{booktitle}{\emph{{IEEE} Symposium on Security and
  Privacy}}.
\newblock


\bibitem[Jung et~al\mbox{.}(2019)]%
        {jung:fuzzification}
\bibfield{author}{\bibinfo{person}{Jinho Jung}, \bibinfo{person}{Hong Hu},
  \bibinfo{person}{David Solodukhin}, \bibinfo{person}{Daniel Pagan},
  \bibinfo{person}{Kyu~Hyung Lee}, {and} \bibinfo{person}{Taesoo Kim}.}
  \bibinfo{year}{2019}\natexlab{}.
\newblock \showarticletitle{{Fuzzification: Anti-Fuzzing Techniques}}. In
  \bibinfo{booktitle}{\emph{Proceedings of the 28th USENIX Security
  Symposium}}.
\newblock


\bibitem[Karp and Kleinberg(2007)]%
        {noisybinarysearchkarp}
\bibfield{author}{\bibinfo{person}{Richard~M. Karp} {and}
  \bibinfo{person}{Robert Kleinberg}.} \bibinfo{year}{2007}\natexlab{}.
\newblock \showarticletitle{Noisy Binary Search and its Applications}. In
  \bibinfo{booktitle}{\emph{Eighteenth Annual {ACM-SIAM} Symposium on Discrete
  Algorithms}}. \bibinfo{publisher}{{SIAM}}, \bibinfo{pages}{881--890}.
\newblock


\bibitem[Karp et~al\mbox{.}(1989)]%
        {karpdnf}
\bibfield{author}{\bibinfo{person}{Richard~M. Karp}, \bibinfo{person}{Michael
  Luby}, {and} \bibinfo{person}{Neal Madras}.} \bibinfo{year}{1989}\natexlab{}.
\newblock \showarticletitle{Monte-Carlo Approximation Algorithms for
  Enumeration Problems}.
\newblock \bibinfo{journal}{\emph{J. Algorithms}} \bibinfo{volume}{10},
  \bibinfo{number}{3} (\bibinfo{year}{1989}), \bibinfo{pages}{429--448}.
\newblock


\bibitem[Koukos and Glykos(2014)]%
        {goodturingbiostatistics}
\bibfield{author}{\bibinfo{person}{Panagiotis~I. Koukos} {and}
  \bibinfo{person}{Nicholas~M. Glykos}.} \bibinfo{year}{2014}\natexlab{}.
\newblock \showarticletitle{On the Application of Good-Turing Statistics to
  Quantify Convergence of Biomolecular Simulations}.
\newblock \bibinfo{journal}{\emph{J. Chem. Inf. Model.}} \bibinfo{volume}{54},
  \bibinfo{number}{1} (\bibinfo{year}{2014}), \bibinfo{pages}{209--217}.
\newblock


\bibitem[Lattner and Adve(2004)]%
        {llvm2004}
\bibfield{author}{\bibinfo{person}{Chris Lattner} {and} \bibinfo{person}{Vikram
  Adve}.} \bibinfo{year}{2004}\natexlab{}.
\newblock \showarticletitle{{LLVM:} A Compilation Framework for Lifelong
  Program Analysis \& Transformation}. In
  \bibinfo{booktitle}{\emph{International Symposium on Code Generation and
  Optimization: Feedback-directed and Runtime Optimization}}
  \emph{(\bibinfo{series}{CGO '04})}. \bibinfo{publisher}{IEEE Computer
  Society}, \bibinfo{pages}{75--}.
\newblock


\bibitem[Lee et~al\mbox{.}(2021)]%
        {cafl}
\bibfield{author}{\bibinfo{person}{Gwangmu Lee}, \bibinfo{person}{Woochul
  Shim}, {and} \bibinfo{person}{Byoungyoung Lee}.}
  \bibinfo{year}{2021}\natexlab{}.
\newblock \showarticletitle{Constraint-guided Directed Greybox Fuzzing}. In
  \bibinfo{booktitle}{\emph{30th {USENIX} Security Symposium}}.
  \bibinfo{publisher}{{USENIX} Association}, \bibinfo{pages}{3559--3576}.
\newblock


\bibitem[Lemieux and Sen(2018)]%
        {fairfuzz}
\bibfield{author}{\bibinfo{person}{Caroline Lemieux} {and}
  \bibinfo{person}{Koushik Sen}.} \bibinfo{year}{2018}\natexlab{}.
\newblock \showarticletitle{Fairfuzz: Targeting Rare Branches to Rapidly
  Increase Greybox Fuzz Testing Coverage}. In \bibinfo{booktitle}{\emph{33rd
  IEEE/ACM International Conference on Automated Software Engineering}}.
  \bibinfo{publisher}{ACM}.
\newblock


\bibitem[Lin et~al\mbox{.}(2020)]%
        {branchdistance}
\bibfield{author}{\bibinfo{person}{Yun Lin}, \bibinfo{person}{Jun Sun},
  \bibinfo{person}{Gordon Fraser}, \bibinfo{person}{Ziheng Xiu},
  \bibinfo{person}{Ting Liu}, {and} \bibinfo{person}{Jin~Song Dong}.}
  \bibinfo{year}{2020}\natexlab{}.
\newblock \showarticletitle{Recovering Fitness Gradients for Interprocedural
  Boolean Flags in Search-based Testing}. In \bibinfo{booktitle}{\emph{29th
  International Symposium on Software Testing and Analysis}}.
\newblock


\bibitem[Lyu et~al\mbox{.}(2019)]%
        {mopt}
\bibfield{author}{\bibinfo{person}{Chenyang Lyu}, \bibinfo{person}{Shouling
  Ji}, \bibinfo{person}{Chao Zhang}, \bibinfo{person}{Yuwei Li},
  \bibinfo{person}{Wei{-}Han Lee}, \bibinfo{person}{Yu Song}, {and}
  \bibinfo{person}{Raheem Beyah}.} \bibinfo{year}{2019}\natexlab{}.
\newblock \showarticletitle{{MOPT:} Optimized Mutation Scheduling for Fuzzers}.
  In \bibinfo{booktitle}{\emph{28th {USENIX} Security Symposium}}.
  \bibinfo{publisher}{{USENIX} Association}, \bibinfo{pages}{1949--1966}.
\newblock


\bibitem[Maurer and Pontil(2009)]%
        {chernoffsamplevariance}
\bibfield{author}{\bibinfo{person}{Andreas Maurer} {and}
  \bibinfo{person}{Massimiliano Pontil}.} \bibinfo{year}{2009}\natexlab{}.
\newblock \showarticletitle{Empirical Bernstein Bounds and Sample-Variance
  Penalization}. In \bibinfo{booktitle}{\emph{The 22nd Conference on Learning
  Theory}}.
\newblock


\bibitem[Meng et~al\mbox{.}(2022)]%
        {lineartemporallogic}
\bibfield{author}{\bibinfo{person}{Ruijie Meng}, \bibinfo{person}{Zhen Dong},
  \bibinfo{person}{Jialin Li}, \bibinfo{person}{Ivan Beschastnikh}, {and}
  \bibinfo{person}{Abhik Roychoudhury}.} \bibinfo{year}{2022}\natexlab{}.
\newblock \showarticletitle{Finding Counterexamples of Temporal Logic
  Properties in Software Implementations via Greybox Fuzzing}. In
  \bibinfo{booktitle}{\emph{International Conference on Software Engineering}}.
\newblock


\bibitem[Nagy and Hicks(2019)]%
        {fullspeedfuzzing}
\bibfield{author}{\bibinfo{person}{Stefan Nagy} {and} \bibinfo{person}{Matthew
  Hicks}.} \bibinfo{year}{2019}\natexlab{}.
\newblock \showarticletitle{Full-Speed Fuzzing: Reducing Fuzzing Overhead
  through Coverage-Guided Tracing}. In \bibinfo{booktitle}{\emph{{IEEE}
  Symposium on Security and Privacy}}. \bibinfo{publisher}{{IEEE}},
  \bibinfo{pages}{787--802}.
\newblock


\bibitem[{\"{O}}sterlund et~al\mbox{.}(2020)]%
        {parmesan}
\bibfield{author}{\bibinfo{person}{Sebastian {\"{O}}sterlund},
  \bibinfo{person}{Kaveh Razavi}, \bibinfo{person}{Herbert Bos}, {and}
  \bibinfo{person}{Cristiano Giuffrida}.} \bibinfo{year}{2020}\natexlab{}.
\newblock \showarticletitle{ParmeSan: Sanitizer-guided Greybox Fuzzing}. In
  \bibinfo{booktitle}{\emph{29th {USENIX} Security Symposium}}.
  \bibinfo{publisher}{{USENIX} Association}, \bibinfo{pages}{2289--2306}.
\newblock


\bibitem[Peng et~al\mbox{.}(2014)]%
        {forcedexecutionxforce}
\bibfield{author}{\bibinfo{person}{Fei Peng}, \bibinfo{person}{Zhui Deng},
  \bibinfo{person}{Xiangyu Zhang}, \bibinfo{person}{Dongyan Xu},
  \bibinfo{person}{Zhiqiang Lin}, {and} \bibinfo{person}{Zhendong Su}.}
  \bibinfo{year}{2014}\natexlab{}.
\newblock \showarticletitle{X-Force: Force-Executing Binary Programs for
  Security Applications}. In \bibinfo{booktitle}{\emph{23rd {USENIX} Security
  Symposium}}. \bibinfo{publisher}{{USENIX} Association},
  \bibinfo{pages}{829--844}.
\newblock


\bibitem[Peng et~al\mbox{.}(2018)]%
        {tfuzz}
\bibfield{author}{\bibinfo{person}{Hui Peng}, \bibinfo{person}{Yan
  Shoshitaishvili}, {and} \bibinfo{person}{Mathias Payer}.}
  \bibinfo{year}{2018}\natexlab{}.
\newblock \showarticletitle{T-Fuzz: Fuzzing by Program Transformation}. In
  \bibinfo{booktitle}{\emph{IEEE Symposium on Security and Privacy}}.
  \bibinfo{publisher}{{IEEE} Computer Society}, \bibinfo{pages}{697--710}.
\newblock


\bibitem[Pham et~al\mbox{.}(2021)]%
        {aflsmart}
\bibfield{author}{\bibinfo{person}{Van{-}Thuan Pham}, \bibinfo{person}{Marcel
  B{\"{o}}hme}, \bibinfo{person}{Andrew~E. Santosa},
  \bibinfo{person}{Alexandru~Razvan Caciulescu}, {and} \bibinfo{person}{Abhik
  Roychoudhury}.} \bibinfo{year}{2021}\natexlab{}.
\newblock \showarticletitle{Smart Greybox Fuzzing}.
\newblock \bibinfo{journal}{\emph{{IEEE} Trans. Software Eng.}}
  \bibinfo{volume}{47}, \bibinfo{number}{9} (\bibinfo{year}{2021}),
  \bibinfo{pages}{1980--1997}.
\newblock


\bibitem[Poeplau and Francillon(2020)]%
        {symcc}
\bibfield{author}{\bibinfo{person}{Sebastian Poeplau} {and}
  \bibinfo{person}{Aurélien Francillon}.} \bibinfo{year}{2020}\natexlab{}.
\newblock \showarticletitle{Symbolic Execution with {SymCC}: Don't Interpret,
  Compile!}. In \bibinfo{booktitle}{\emph{29th {USENIX} Security Symposium}}.
  \bibinfo{publisher}{{USENIX} Association}, \bibinfo{pages}{181--198}.
\newblock
\showISBNx{978-1-939133-17-5}


\bibitem[Rujeerapaiboon et~al\mbox{.}(2018)]%
        {chebyshev}
\bibfield{author}{\bibinfo{person}{Napat Rujeerapaiboon},
  \bibinfo{person}{Daniel Kuhn}, {and} \bibinfo{person}{Wolfram Wiesemann}.}
  \bibinfo{year}{2018}\natexlab{}.
\newblock \showarticletitle{Chebyshev Inequalities for Products of Random
  Variables}.
\newblock \bibinfo{journal}{\emph{Math. Oper. Res.}} \bibinfo{volume}{43},
  \bibinfo{number}{3} (\bibinfo{year}{2018}), \bibinfo{pages}{887--918}.
\newblock


\bibitem[Salls et~al\mbox{.}(2020)]%
        {abstractionfunctions}
\bibfield{author}{\bibinfo{person}{Christopher Salls}, \bibinfo{person}{Aravind
  Machiry}, \bibinfo{person}{Adam Doup\'e}, \bibinfo{person}{Yan
  Shoshitaishvili}, \bibinfo{person}{Christopher Kruegel}, {and}
  \bibinfo{person}{Giovanni Vigna}.} \bibinfo{year}{2020}\natexlab{}.
\newblock \showarticletitle{{Exploring Abstraction Functions in Fuzzing}}. In
  \bibinfo{booktitle}{\emph{{Proceedings of the IEEE Conference on
  Communications and Network Security (CNS)}}}.
\newblock


\bibitem[Sedgewick(2002)]%
        {graphalgo}
\bibfield{author}{\bibinfo{person}{Robert Sedgewick}.}
  \bibinfo{year}{2002}\natexlab{}.
\newblock \bibinfo{booktitle}{\emph{Algorithms in {C} - Part 5: Graph
  Algorithms}}.
\newblock \bibinfo{publisher}{Addison-Wesley-Longman}.
\newblock


\bibitem[Shaffer(2012)]%
        {shaffer2012data}
\bibfield{author}{\bibinfo{person}{C.A. Shaffer}.}
  \bibinfo{year}{2012}\natexlab{}.
\newblock \bibinfo{booktitle}{\emph{Data Structures and Algorithm Analysis in
  C++, Third Edition}}.
\newblock \bibinfo{publisher}{Dover Publications}.
\newblock
\showISBNx{9780486172620}


\bibitem[She et~al\mbox{.}(2020)]%
        {neutaint}
\bibfield{author}{\bibinfo{person}{Dongdong She}, \bibinfo{person}{Yizheng
  Chen}, \bibinfo{person}{Abhishek Shah}, \bibinfo{person}{Baishakhi Ray},
  {and} \bibinfo{person}{Suman Jana}.} \bibinfo{year}{2020}\natexlab{}.
\newblock \showarticletitle{Neutaint: Efficient Dynamic Taint Analysis with
  Neural Networks}. In \bibinfo{booktitle}{\emph{{IEEE} Symposium on Security
  and Privacy}}. \bibinfo{publisher}{{IEEE}}, \bibinfo{pages}{1527--1543}.
\newblock


\bibitem[She et~al\mbox{.}(2019)]%
        {neuzz}
\bibfield{author}{\bibinfo{person}{Dongdong She}, \bibinfo{person}{Kexin Pei},
  \bibinfo{person}{Dave Epstein}, \bibinfo{person}{Junfeng Yang},
  \bibinfo{person}{Baishakhi Ray}, {and} \bibinfo{person}{Suman Jana}.}
  \bibinfo{year}{2019}\natexlab{}.
\newblock \showarticletitle{{NEUZZ:} Efficient Fuzzing with Neural Program
  Smoothing}. In \bibinfo{booktitle}{\emph{{IEEE} Symposium on Security and
  Privacy}}. \bibinfo{publisher}{{IEEE}}, \bibinfo{pages}{803--817}.
\newblock


\bibitem[She et~al\mbox{.}(2022)]%
        {kscheduler}
\bibfield{author}{\bibinfo{person}{Dongdong She}, \bibinfo{person}{Abhishek
  Shah}, {and} \bibinfo{person}{Suman Jana}.} \bibinfo{year}{2022}\natexlab{}.
\newblock \showarticletitle{Effective Seed Scheduling for Fuzzing with Graph
  Centrality Analysis}. In \bibinfo{booktitle}{\emph{IEEE Symposium on Security
  and Privacy}}.
\newblock


\bibitem[Slivkins(2019)]%
        {multiarmed}
\bibfield{author}{\bibinfo{person}{Aleksandrs Slivkins}.}
  \bibinfo{year}{2019}\natexlab{}.
\newblock \showarticletitle{Introduction to Multi-Armed Bandits}.
\newblock \bibinfo{journal}{\emph{Found. Trends Mach. Learn.}}
  \bibinfo{volume}{12}, \bibinfo{number}{1-2} (\bibinfo{year}{2019}),
  \bibinfo{pages}{1--286}.
\newblock


\bibitem[Stephens et~al\mbox{.}(2016)]%
        {driller}
\bibfield{author}{\bibinfo{person}{Nick Stephens}, \bibinfo{person}{John
  Grosen}, \bibinfo{person}{Christopher Salls}, \bibinfo{person}{Andrew
  Dutcher}, \bibinfo{person}{Ruoyu Wang}, \bibinfo{person}{Jacopo Corbetta},
  \bibinfo{person}{Yan Shoshitaishvili}, \bibinfo{person}{Christopher Kruegel},
  {and} \bibinfo{person}{Giovanni Vigna}.} \bibinfo{year}{2016}\natexlab{}.
\newblock \showarticletitle{Driller: Augmenting Fuzzing Through Selective
  Symbolic Execution}. In \bibinfo{booktitle}{\emph{23rd Annual Network and
  Distributed System Security Symposium}}. \bibinfo{publisher}{The Internet
  Society}.
\newblock


\bibitem[Valiant(1984)]%
        {pacvaliant}
\bibfield{author}{\bibinfo{person}{Leslie~G. Valiant}.}
  \bibinfo{year}{1984}\natexlab{}.
\newblock \showarticletitle{A Theory of the Learnable}. In
  \bibinfo{booktitle}{\emph{16th Annual {ACM} Symposium on Theory of
  Computing,}}. \bibinfo{publisher}{{ACM}}, \bibinfo{pages}{436--445}.
\newblock


\bibitem[Wang et~al\mbox{.}(2019)]%
        {forcedexecutionother}
\bibfield{author}{\bibinfo{person}{Xiaolei Wang}, \bibinfo{person}{Yuexiang
  Yang}, {and} \bibinfo{person}{Sencun Zhu}.} \bibinfo{year}{2019}\natexlab{}.
\newblock \showarticletitle{Automated Hybrid Analysis of Android Malware
  through Augmenting Fuzzing with Forced Execution}.
\newblock \bibinfo{journal}{\emph{{IEEE} Trans. Mob. Comput.}}
  \bibinfo{volume}{18}, \bibinfo{number}{12} (\bibinfo{year}{2019}),
  \bibinfo{pages}{2768--2782}.
\newblock


\bibitem[Wang et~al\mbox{.}(2020)]%
        {tortoise}
\bibfield{author}{\bibinfo{person}{Yanhao Wang}, \bibinfo{person}{Xiangkun
  Jia}, \bibinfo{person}{Yuwei Liu}, \bibinfo{person}{Kyle Zeng},
  \bibinfo{person}{Tiffany Bao}, \bibinfo{person}{Dinghao Wu}, {and}
  \bibinfo{person}{Purui Su}.} \bibinfo{year}{2020}\natexlab{}.
\newblock \showarticletitle{Not All Coverage Measurements Are Equal: Fuzzing by
  Coverage Accounting for Input Prioritization}. In
  \bibinfo{booktitle}{\emph{Network and Distributed System Security
  Symposium}}.
\newblock


\bibitem[Welford(1962)]%
        {streamingmean}
\bibfield{author}{\bibinfo{person}{B.~P. Welford}.}
  \bibinfo{year}{1962}\natexlab{}.
\newblock \showarticletitle{Note on a Method for Calculating Corrected Sums of
  Squares and Products}.
\newblock \bibinfo{journal}{\emph{Technometrics}} \bibinfo{volume}{4},
  \bibinfo{number}{3} (\bibinfo{year}{1962}), \bibinfo{pages}{419--420}.
\newblock


\bibitem[You et~al\mbox{.}(2019)]%
        {hotbytes}
\bibfield{author}{\bibinfo{person}{Wei You}, \bibinfo{person}{Xuwei Liu},
  \bibinfo{person}{Shiqing Ma}, \bibinfo{person}{David~Mitchel Perry},
  \bibinfo{person}{Xiangyu Zhang}, {and} \bibinfo{person}{Bin Liang}.}
  \bibinfo{year}{2019}\natexlab{}.
\newblock \showarticletitle{{SLF:} Fuzzing Without Valid Seed Inputs}. In
  \bibinfo{booktitle}{\emph{41st International Conference on Software
  Engineering}}. \bibinfo{publisher}{{IEEE} / {ACM}},
  \bibinfo{pages}{712--723}.
\newblock


\bibitem[You et~al\mbox{.}(2020)]%
        {forcedexecutionpmp}
\bibfield{author}{\bibinfo{person}{Wei You}, \bibinfo{person}{Zhuo Zhang},
  \bibinfo{person}{Yonghwi Kwon}, \bibinfo{person}{Yousra Aafer},
  \bibinfo{person}{Fei Peng}, \bibinfo{person}{Yu Shi}, \bibinfo{person}{Carson
  Harmon}, {and} \bibinfo{person}{Xiangyu Zhang}.}
  \bibinfo{year}{2020}\natexlab{}.
\newblock \showarticletitle{{PMP:} Cost-effective Forced Execution with
  Probabilistic Memory Pre-planning}. In \bibinfo{booktitle}{\emph{{IEEE}
  Symposium on Security and Privacy}}. \bibinfo{publisher}{{IEEE}},
  \bibinfo{pages}{1121--1138}.
\newblock


\bibitem[Yue et~al\mbox{.}(2020)]%
        {ecofuzz}
\bibfield{author}{\bibinfo{person}{Tai Yue}, \bibinfo{person}{Pengfei Wang},
  \bibinfo{person}{Yong Tang}, \bibinfo{person}{Enze Wang}, \bibinfo{person}{Bo
  Yu}, \bibinfo{person}{Kai Lu}, {and} \bibinfo{person}{Xu Zhou}.}
  \bibinfo{year}{2020}\natexlab{}.
\newblock \showarticletitle{EcoFuzz: Adaptive Energy-Saving Greybox Fuzzing as
  a Variant of the Adversarial Multi-Armed Bandit}. In
  \bibinfo{booktitle}{\emph{29th {USENIX} Security Symposium}}.
  \bibinfo{publisher}{{USENIX} Association}, \bibinfo{pages}{2307--2324}.
\newblock


\bibitem[Yun et~al\mbox{.}(2018)]%
        {qsym}
\bibfield{author}{\bibinfo{person}{Insu Yun}, \bibinfo{person}{Sangho Lee},
  \bibinfo{person}{Meng Xu}, \bibinfo{person}{Yeongjin Jang}, {and}
  \bibinfo{person}{Taesoo Kim}.} \bibinfo{year}{2018}\natexlab{}.
\newblock \showarticletitle{{QSYM: A Practical Concolic Execution Engine
  Tailored for Hybrid Fuzzing}}. In \bibinfo{booktitle}{\emph{27th USENIX
  Security Symposium}}. \bibinfo{publisher}{{USENIX} Association},
  \bibinfo{pages}{745--761}.
\newblock


\bibitem[Zalewski(2022)]%
        {afl}
\bibfield{author}{\bibinfo{person}{Michał Zalewski}.}
  \bibinfo{year}{2022}\natexlab{}.
\newblock \bibinfo{title}{{American Fuzzy Lop (AFL) README}}.
\newblock
  \bibinfo{howpublished}{\url{http://lcamtuf.coredump.cx/afl/README.txt}}.
\newblock


\bibitem[Zeller et~al\mbox{.}(2022)]%
        {fuzzingbook}
\bibfield{author}{\bibinfo{person}{Andreas Zeller}, \bibinfo{person}{Rahul
  Gopinath}, \bibinfo{person}{Marcel B{\"o}hme}, \bibinfo{person}{Gordon
  Fraser}, {and} \bibinfo{person}{Christian Holler}.}
  \bibinfo{year}{2022}\natexlab{}.
\newblock \bibinfo{title}{The fuzzing book}.
\newblock
\newblock


\bibitem[Zhang et~al\mbox{.}(2022)]%
        {androiddirected}
\bibfield{author}{\bibinfo{person}{Lei Zhang}, \bibinfo{person}{Keke Lian},
  \bibinfo{person}{Haoyu Xiao}, \bibinfo{person}{Zhibo Zhang},
  \bibinfo{person}{Peng Liu}, \bibinfo{person}{Yuan Zhang},
  \bibinfo{person}{Min Yang}, {and} \bibinfo{person}{Haixin Duan}.}
  \bibinfo{year}{2022}\natexlab{}.
\newblock \showarticletitle{Exploit the Last Straw That Breaks Android
  Systems}. In \bibinfo{booktitle}{\emph{{IEEE} Symposium on Security and
  Privacy}}.
\newblock


\bibitem[Zhao et~al\mbox{.}(2019)]%
        {ruleofthree}
\bibfield{author}{\bibinfo{person}{Lei Zhao}, \bibinfo{person}{Yue Duan},
  \bibinfo{person}{Heng Yin}, {and} \bibinfo{person}{Jifeng Xuan}.}
  \bibinfo{year}{2019}\natexlab{}.
\newblock \showarticletitle{Send Hardest Problems My Way: Probabilistic Path
  Prioritization for Hybrid Fuzzing}. In \bibinfo{booktitle}{\emph{26th Annual
  Network and Distributed System Security Symposium}}. \bibinfo{publisher}{The
  Internet Society}.
\newblock


\bibitem[Zhu and B{\"{o}}hme(2021)]%
        {regressiongreybox}
\bibfield{author}{\bibinfo{person}{Xiaogang Zhu} {and} \bibinfo{person}{Marcel
  B{\"{o}}hme}.} \bibinfo{year}{2021}\natexlab{}.
\newblock \showarticletitle{Regression Greybox Fuzzing}. In
  \bibinfo{booktitle}{\emph{{ACM} {SIGSAC} Conference on Computer and
  Communications Security}}. \bibinfo{publisher}{{ACM}},
  \bibinfo{pages}{2169--2182}.
\newblock


\bibitem[Zong et~al\mbox{.}(2020)]%
        {fuzzguard}
\bibfield{author}{\bibinfo{person}{Peiyuan Zong}, \bibinfo{person}{Tao Lv},
  \bibinfo{person}{Dawei Wang}, \bibinfo{person}{Zizhuang Deng},
  \bibinfo{person}{Ruigang Liang}, {and} \bibinfo{person}{Kai Chen}.}
  \bibinfo{year}{2020}\natexlab{}.
\newblock \showarticletitle{FuzzGuard: Filtering out Unreachable Inputs in
  Directed Grey-box Fuzzing through Deep Learning}. In
  \bibinfo{booktitle}{\emph{29th {USENIX} Security Symposium}}.
  \bibinfo{publisher}{{USENIX} Association}, \bibinfo{pages}{2255--2269}.
\newblock


\end{thebibliography}
\appendix

\section{Proofs}
\label{appendix:proofs}
\begin{proof}[Proof of Theorem~\ref{theorem:lowerbound}]
Information theory~\cite{infotheorybits} states that to identify (i.e., encode) a unique element in a set containing $N$ elements, we require at least $\log(N)$ bits. Similarly, to identify a target-reaching input in an input space of size $N$, any directed fuzzing algorithm requires at least $\log(N)$ oracle queries, up to constant factors, since each oracle query provides a constant $c$ bits of information.
\end{proof}

\begin{proof}[Proof of Theorem~\ref{theorem:complexity}]

Ben-Or et al. ~\cite{noisybinarysearchorr} in Theorem 2.1 prove that their noisy binary search algorithm requires $(1-\delta)*\frac{\log(N)}{(\frac{1}{2} - p)^2}$ comparisons in expectation to find the target with success probability at least $1-\delta$. If we map our noisy oracle queries to their noisy comparisons and our input space with a lexicographic total order to their array, our fuzzing algorithm in Algorithm~\ref{alg:noisyfuzzer} directly translates to their noisy binary search algorithm and therefore inherits the same analysis. The algorithm analysis uses information entropy arguments to show that the expected information gain increases at each query, followed by concentration bounds to show that when the algorithm terminates, the algorithm can identify the region containing the target with high probability. For more details, see Section 2.3 in \cite{noisybinarysearchorr}.

\end{proof}

\begin{proof}[Proof of Theorem~\ref{theorem:optimality}]
Theorem 2.8 by Ben-Or et al. ~\cite{noisybinarysearchorr} shows that  $\Omega((1-\delta)*\frac{\log(N)}{(\frac{1}{2} - p)^2})$ is a lower bound for any noisy binary search algorithm (i.e., cannot be improved upon) and therefore implies that Algorithm~\ref{alg:noisyfuzzer} is optimal. This lower bound results from a reduction to a well-studied problem in information theory where two parties wish to communicate over a noisy channel (i.e., noisy channel coding problem). For exact details, we refer the reader to Section 2.4 in ~\cite{noisybinarysearchorr}.
We note in the special case of a noisy counting oracle that returns $c=1$ bit of information without noise ($p=0$), Algorithm~\ref{alg:noisyfuzzer} also meets the lower bound, up to constant factors, from Theorem~\ref{theorem:lowerbound}, so it is optimal in both noisy and noiseless settings. 
\end{proof}

\section{Total Order Assignment}
\label{appendix:totalorder}
In Section ~\ref{section:implementation}, we discussed that although our binary search algorithm is agnostic to the underlying total order, which is necessary to ensure splitting an input region is unambiguous, lexicographic order is a poor choice because it assumes that all bytes equally contribute to the count of inputs reaching the target. Empirical evidence from prior work~\cite{neutaint, neuzz, redqueen, greyone, hotbytes} has shown that not all input bytes equally contribute to program behaviors and therefore such an assumption does not hold for many real-world programs. In this experiment, we show that lexicographic order is poor choice by comparing it with our technique to assign a total order.

\begin{table}[!b]
     \caption{\label{tab:aba_ordering}\small\textbf{ Mean time to trigger the bug with and without lexicographic order across 10 trials.}}
    \centering
    \footnotesize
\begin{tabular}{lcc}
\toprule
\multicolumn{1}{c}{\textbf{Bug ID}} & \multicolumn{1}{c}{\textbf{\ToolName{}}}     & \multicolumn{1}{c}{\textbf{Lexicographic Order}}  \\
\midrule
\texttt{XML009} &   13s  &         8m14s \\
\texttt{PNG003} &   15s  &        13m27s  \\
\texttt{XML017} &   16s  &         4m12s  \\
\texttt{PHP004}  & 1m04s &      4h12m21s  \\
\texttt{PDF011}  & 1m41s  &   $T.O^\star$ \\
\texttt{PHP009} & 1m07s   &      3h30m17s \\
\texttt{SSL020}  & 9m16s  &        20m18s \\
\texttt{TIF009} & 9m49s  &        16m23s  \\
\texttt{PDF008}  & 3m21s &   $T.O^\star$  \\
\midrule
\multicolumn{2}{c|}{Arithmetic mean speedup } & 210x \\
\multicolumn{2}{c|}{Median speedup} & 54x \\
\bottomrule
\end{tabular}

\end{table}
Table \ref{tab:aba_ordering} summarizes the results. \ToolName{} outperforms the lexicographic ordering by 210x in arithmetic mean and 54x in median, showing that lexicographic ordering is a poor choice. In the future, we plan to investigate what properties constitute an optimal total order assignment.

\section{Preprocessing Times}
\label{appendix:preprocessing}

\begin{table}[h]
     \caption{\label{tab:preprocessingmagma}\small\textbf{Mean preprocessing times over 10 trials for the Magma and Fuzzer Test Suite benchmarks. }}
   \footnotesize
    \centering
\begin{tabular}{lcccr}
\toprule
\multicolumn{1}{c}{\textbf{Library}}    & \multicolumn{1}{c}{\textbf{\ToolName{}}} & \multicolumn{1}{c}{\textbf{AFLGo}} & \multicolumn{1}{c}{\textbf{ParmeSan}} &
\multicolumn{1}{c}{\textbf{CFG Nodes}} \\
\midrule
 \texttt{libpng (Magma)}     & 54s   & 1m52s     & 32s     & 6940   \\
 \texttt{libtiff (Magma)}    & 55s   & 10m39s    & 33s     & 15485  \\
 \texttt{libxml2 (Magma)}    & 2m41s & 24m08s    & 8m18s   & 65735  \\
 \texttt{openssl (Magma)}    & 5m11s & 1h31m11s  & 58m15s  & 95949\\
\texttt{php (Magma)}         & 3m21s & 14h20m09s & N/A     & 371648  \\
\texttt{poppler (Magma)}     & 3m27s & 2h28m09s  & 2m26s   & 71591  \\
\texttt{libjpeg (FTS)}       & 7s    & 1m45s     & 32s     & 11173  \\
\texttt{libpng (FTS)}        & 38s   & 54s       & 31s     & 5257   \\
 \texttt{freetype2 (FTS)}    & 1m15s & 12m07s    & 38s     & 28662  \\
\midrule
        \multicolumn{1}{c|}{Arithmetic mean  } & 2m04s & 2h07m53s & 8m58s & 74716 \\
        \multicolumn{1}{c|}{Median } & 1m15s & 12m07s & 35s & 28662 \\
\bottomrule
\end{tabular}

\end{table}

Existing directed greybox fuzzers use preprocessing to better identify which inputs are more likely to reach the target as described in Section \ref{section:implementation}. We measure the preprocessing times of the tested fuzzers to see how they compare. 

For \texttt{AFLGo}, we measure the time it takes to compute distance over the control-flow graph, which consists of visiting every function, computing intra-function distances, and using callgraphs to compute distances between functions. For \texttt{ParmeSan}, it uses a dynamic CFG, so it is difficult to accurately measure this time since preprocessing is conflated with runtime. We instead approximate this time by measuring the time it takes to run over only the initial seed corpus, in which the dynamic CFG is constructed and distances are computed. We emailed the authors to ensure our setup was reasonable and they confirmed that our experimental setup is reasonable given the dynamic CFG component. For \ToolName{}, we measure the time it takes to perform preprocessing as described in Section \ref{section:implementation}. Table~\ref{tab:preprocessingmagma} summarizes the results for both Magma and Fuzzer Test Suite for the bugs targets found in Section~\ref{sec:evaluationrq1}.  
\section{Monte Carlo Execution Exceptions}
\label{appendix:crashrate}

\begin{table}[t!]
     \caption{\label{tab:crashrate}\small\textbf{ Average proportion of Monte Carlo Executions with Exceptions on the Magma benchmark over 10 trials.}}
    \centering
    \footnotesize
\begin{tabular}{cc}
\toprule
\multicolumn{1}{c}{\textbf{Library}}    & \multicolumn{1}{c}{\textbf{Proportion of Executions with Exceptions  (\%)}}  \\
\midrule
\texttt{libpng}     & 0.26\%    \\
\texttt{libtiff}    & 2.31\%    \\
\texttt{libxml2}    & 0.84\%     \\
\texttt{openssl}    & 1.22\%    \\
\texttt{php}        & 4.04\%     \\
\texttt{poppler}    & 2.24\%    \\
\texttt{sqlite3}    & 10.9\%    \\
\midrule
\multicolumn{1}{c|}{Arithmetic mean  } & 3.11\% \\
\multicolumn{1}{c|}{Median } & 2.14\% \\
\bottomrule
\end{tabular}
\end{table}

Since handling a large number of program exceptions can potentially incur high overheads (i.e., context switches from signal handling), in this experiment, we investigate how many times \Execution \ handles program exceptions. Specifically, we measure the ratio between the number of executions which require \Execution \ to handle program exceptions to the total number of executions in our Magma evaluation, repeated 10 times to reduce variability.

Table ~\ref{tab:crashrate} summarizes the results, with $3.11\%$ in arithmetic mean and $2.14\%$ in median for the proportion. This experiment shows that many Monte Carlo Executions do not involve program exceptions and therefore incur low overhead, a finding that better helps explain our speedups. 
\begin{algorithm}[!]
\footnotesize
\caption{\small Optimal Deterministic Fuzzer.} 
\label{alg:fuzzer} 
\lstset{basicstyle=\ttfamily\footnotesize, breaklines=true}
\begin{tabular}{|lp{2.6in}|}\hline
\textbf{Input}:
    & \textit{$\mathbb I$} $\leftarrow$ Input Space as Array\\
    & \textit{$O$} $\leftarrow$ Noiseless ($p=0$) Counting Oracle from Equation~\ref{equation:oracle} \\
\hline
\end{tabular}
\begin{algorithmic}[1]
\State $l = 0; r = |\mathbb I| - 1$ \Comment{\textcolor{purple}{initialize left and right bounds of $\mathbb I$}}

\While{l < r}
    \State $m =\lfloor (l+r)/2\rfloor$ \Comment{\textcolor{purple}{select midpoint input}}
        
    \State $I_L, I_R = \textsf{left and right input regions of index m}$
    \If{ $O(I_L, I_R) = 1 $} 
        \State  $r = m$ \Comment{\textcolor{purple}{select left subregion}}
    \Else
        \State $l = m + 1$ \Comment{\textcolor{purple}{select right subregion}}
    \EndIf
\EndWhile
\end{algorithmic}
\end{algorithm}

\begin{table}[!]
     \caption{\label{tab:magma4}\small\textbf{ 
      Mean time to trigger Magma bugs for each tested fuzzer's undirected counterpart over 20 trials. Since \texttt{Angora} crashed on \texttt{php}, we write N/A for it. See Table~\ref{tab:magma} for the full caption. }}
   \footnotesize
    \centering
    \setlength{\tabcolsep}{2.5pt}
\begin{tabular}{|l|c|ccr|ccr|}
\hline
\rowcolor[HTML]{EFEFEF}
 & \multicolumn{1}{c|}{\textbf{\ToolName{}}} & \multicolumn{3}{c|}{\textbf{\texttt{AFL}}} & \multicolumn{3}{c|}{\textbf{\texttt{Angora}}}  \\
\rowcolor[HTML]{EFEFEF} \multirow{-2}{*}{\textbf{Bug ID}}
 & \multicolumn{1}{c|}{\textbf{Time}} & \multicolumn{1}{c}{\textbf{Time}}& \multicolumn{1}{c}{\textbf{$(x)$}} &\multicolumn{1}{c|}{\textbf{$(p)$}} & \multicolumn{1}{c}{\textbf{Time}} &  \multicolumn{1}{c}{\textbf{$(x)$}} & \multicolumn{1}{c|}{\textbf{$(p)$}}  \\
\hline
\texttt{PDF010} &    3m15s &    4m25s &          1x &     0.09  &     $T.O^\star$ &        $>$111x &       <0.01   \\
\texttt{PDF016} &    3m23s &    4m02s &          1x &     0.11  &      12m    &             4x &       <0.01   \\
\texttt{PHP004} &    1m04s &    2m39s &          2x &     <0.01  &      N/A    &         N/A    &          N/A  \\
\texttt{PHP009} &    1m07s &    3m38s &          3x &     <0.01  &      N/A    &         N/A    &          N/A  \\
\texttt{PHP011} &    1m01s &    2m29s &          2x &     <0.01  &      N/A    &         N/A    &          N/A  \\
\texttt{PNG003} &      15s &      15s &          1x &     0.25  &         59s &             4x &       <0.01   \\
\texttt{PNG006} &    1m36s &  $T.O^\star$ &     $>$225x &     <0.01  &       2m42s &             2x &       0.03   \\
\texttt{SSL002} &    1m44s &    4m06s &          2x &     <0.01  &      55m53s &            32x &       <0.01   \\
\texttt{SSL003} &    1m39s &    2m50s &          2x &     <0.01  &      20m40s &            13x &       <0.01   \\
\texttt{SSL009} &    4m59s &  $T.O^\star$ &      $>$72x &     <0.01  &    4h56m06s &            59x &       <0.01   \\
\texttt{TIF005} &    9m33s &  $T.O^\star$ &      $>$38x &     <0.01  &    3h57m57s &            25x &       <0.01   \\
\texttt{TIF006} &    9m36s & 5h21m19s &         33x &     <0.01  &    4h40m09s &            29x &       <0.01   \\
\texttt{TIF007} &    8m18s & 1h13m28s &          9x &     0.04  &    1h31m48s &            11x &       <0.01   \\
\texttt{TIF012} &    9m59s & 1h54m37s &         11x &     <0.01  &    4h51m22s &            29x &       <0.01   \\
\texttt{TIF014} &    1m36s & 4h55m49s &        185x &     <0.01  &    5h38m17s &           211x &       <0.01   \\
\texttt{XML017} &      16s &    1m23s &          5x &     <0.01  &       1m57s &             7x &       <0.01   \\
\rowcolor[HTML]{ECF4FF}
\texttt{PDF003} &    1m39s &  $T.O^\star$ &     $>$218x &     <0.01  &     $T.O^\star$ &        $>$218x &       <0.01   \\
\rowcolor[HTML]{ECF4FF}
\texttt{PDF008} &    3m21s &  $T.O^\star$ &     $>$107x &     <0.01  &     $T.O^\star$ &        $>$107x &       <0.01   \\
\rowcolor[HTML]{ECF4FF}
\texttt{PDF011} &    1m41s &  $T.O^\star$ &     $>$214x &     <0.01  &     $T.O^\star$ &        $>$214x &       <0.01   \\
\rowcolor[HTML]{ECF4FF}
\texttt{PDF018} &    1m43s &  $T.O^\star$ &     $>$210x &     <0.01  &     $T.O^\star$ &        $>$210x &       <0.01   \\
\rowcolor[HTML]{ECF4FF}
\texttt{PDF019} &    1m37s &  $T.O^\star$ &     $>$223x &     <0.01  &     $T.O^\star$ &        $>$223x &       <0.01   \\
\rowcolor[HTML]{ECF4FF}
\texttt{PNG001} &    3m17s &  $T.O^\star$ &     $>$110x &     <0.01  &     $T.O^\star$ &        $>$110x &       <0.01   \\
\rowcolor[HTML]{ECF4FF}
\texttt{PNG007} &    3m21s &  $T.O^\star$ &     $>$107x &     <0.01  &     $T.O^\star$ &        $>$107x &       <0.01   \\
\rowcolor[HTML]{ECF4FF}
\texttt{SSL020} &    9m16s &  $T.O^\star$ &      $>$39x &     <0.01  &     $T.O^\star$ &         $>$39x &       <0.01   \\
\rowcolor[HTML]{ECF4FF}
\texttt{TIF001} &    9m43s &  $T.O^\star$ &      $>$37x &     <0.01  &     $T.O^\star$ &         $>$37x &       <0.01   \\
\rowcolor[HTML]{ECF4FF}
\texttt{TIF002} &    9m58s &  $T.O^\star$ &      $>$36x &     <0.01  &     $T.O^\star$ &         $>$36x &       <0.01   \\
\rowcolor[HTML]{ECF4FF}
\texttt{TIF009} &    9m49s &  $T.O^\star$ &      $>$37x &     <0.01  &     $T.O^\star$ &         $>$37x &       <0.01   \\
\rowcolor[HTML]{ECF4FF}
\texttt{XML009} &      13s &  $T.O^\star$ &    $>$1662x &     <0.01  &     $T.O^\star$ &       $>$1662x &       <0.01   \\
\hline
\multicolumn{2}{|c|}{\textbf{Mean speedup}}   & & 128x & & & 142x & \\
\multicolumn{2}{|c|}{\textbf{Median speedup}} & & 37x  & & & 37x  & \\

\hline
\end{tabular}

\end{table}

\begin{table}[!]
     \caption{ \label{tab:fts4} \small\textbf{ 
      Mean time to reach Fuzzer Test Suite targets for each tested fuzzer's undirected counterpart over 20 trials. See Table~\ref{tab:fts} for the full caption.}} 
   \footnotesize
    \centering
    \setlength{\tabcolsep}{1.5pt}
\begin{tabular}{|l|r|ccr|ccr|}
\hline
\rowcolor[HTML]{EFEFEF}
 & \multicolumn{1}{c|}{\textbf{\ToolName{}}} & \multicolumn{3}{c|}{\textbf{\texttt{AFL}}} & \multicolumn{3}{c|}{\textbf{\texttt{Angora}}}  \\
\rowcolor[HTML]{EFEFEF} \multirow{-2}{*}{\textbf{Bug ID}}
 & \multicolumn{1}{c|}{\textbf{Time}} & \multicolumn{1}{c}{\textbf{Time}}& \multicolumn{1}{c}{\textbf{$(x)$}} &\multicolumn{1}{c|}{\textbf{$(p)$}} & \multicolumn{1}{c}{\textbf{Time}} &  \multicolumn{1}{c}{\textbf{$(x)$}} & \multicolumn{1}{c|}{\textbf{$(p)$}}  \\
\hline
\texttt{ttgload.c:1710}  &       1s &       1s &          1x &     0.07   &          1s &             1x &      0.07     \\
\texttt{ttinterp.c:2186} &    9m57s &  $T.O^\star$ &      $>$36x &     <0.01  &         24m &             2x &      <0.01    \\
\texttt{cf2intrp.c:361}  &      58s &      40m &         41x &     <0.01  &     $T.O^\star$ &        $>$372x &      <0.01    \\
\texttt{jdmarker.c:659}  &      32s &    1h10m &        131x &     <0.01  &       1h15m &           141x &      <0.01    \\
\texttt{pngrutil.c:139}  &       1s &       1s &          1x &     0.07   &          1s &             1x &      0.07     \\
\texttt{pngrutil.c:3182} &      28s &    3m20s &          7x &     <0.01  &       1m22s &             3x &      <0.01    \\
\texttt{pngread.c:738}   &       1s &       1s &          1x &     0.07   &          1s &             1x &      0.07     \\
\rowcolor[HTML]{ECF4FF}
\texttt{pngrutil.c:1393} &      51s &  $T.O^\star$ &     $>$424x &     <0.01  &     $T.O^\star$ &        $>$424x &      <0.01    \\
\hline

\multicolumn{2}{|c|}{\textbf{Mean speedup}}   & & 80x & & & 118x & \\
\multicolumn{2}{|c|}{\textbf{Median speedup}} & & 22x  & & & 3x  & \\

\hline
\end{tabular}

\end{table}

\end{document}